%% file: example_paper.tex
\documentclass{article}

\usepackage{microtype}
\usepackage{graphicx}
\usepackage{subcaption}
\usepackage{booktabs} % for professional tables

\usepackage{booktabs}
\usepackage{multirow}
\usepackage{hyperref}

\usepackage[preprint]{icml2026}

\usepackage{amsmath}
\usepackage{amssymb}
\usepackage{mathtools}
\usepackage{amsthm}
\usepackage{algorithm}
\usepackage{algorithmic}
\usepackage[capitalize,noabbrev]{cleveref}

\theoremstyle{plain}
\newtheorem{theorem}{Theorem}[section]

\newtheorem{lemma}[theorem]{Lemma}
\newtheorem{corollary}[theorem]{Corollary}
\theoremstyle{definition}
\newtheorem{definition}[theorem]{Definition}

\theoremstyle{remark}

\usepackage[textsize=tiny]{todonotes}

\begin{document}

\twocolumn[
  % \icmltitle{SVD-attention: Enhancing Cross-Sequence User Modeling with Optimized and Differentiable SVD}
  % \icmltitle{Fast and Effective SVD-Attention for Low-Rank Matrix}
  \icmltitle{SOLAR: SVD-Optimized Lifelong Attention for Recommendation}

  % It is OKAY to include author information, even for blind submissions: the
  % style file will automatically remove it for you unless you've provided
  % the [accepted] option to the icml2026 package.

  % List of affiliations: The first argument should be a (short) identifier you
  % will use later to specify author affiliations Academic affiliations
  % should list Department, University, City, Region, Country Industry
  % affiliations should list Company, City, Region, Country

  % You can specify symbols, otherwise they are numbered in order. Ideally, you
  % should not use this facility. Affiliations will be numbered in order of
  % appearance and this is the preferred way.
  \icmlsetsymbol{equal}{*}
    \icmlsetsymbol{intern}{$\dagger$}

  \icmlsetsymbol{intern}{$\dagger$}
  \begin{icmlauthorlist}
    \icmlauthor{Chenghao Zhang}{equal,comp}
    \icmlauthor{Chao Feng}{equal,comp}
    \icmlauthor{Yuanhao Pu}{equal,intern,comp}
    \icmlauthor{Xunyong Yang}{intern,comp}
    \icmlauthor{Wenhui Yu}{yyy}
    \icmlauthor{Xiang Li}{comp}
    \icmlauthor{Yongqi Liu}{comp}
    %\icmlauthor{}{sch}
    \icmlauthor{Lantao Hu}{comp}
    \icmlauthor{Kaiqiao Zhan}{comp}
    \icmlauthor{Han Li}{comp}
    \icmlauthor{Kun Gai}{yyy}
  \end{icmlauthorlist}

  \icmlaffiliation{comp}{Kuaishou Technology, Beijing, China}
    \icmlaffiliation{yyy}{Unaffiliated}
  % \icmlaffiliation{sch}{School of ZZZ, Institute of WWW, Location, Country}

  \icmlcorrespondingauthor{Chao Feng}{fengchao08@kuaishou.com}

  % You may provide any keywords that you find helpful for describing your
  % paper; these are used to populate the "keywords" metadata in the PDF but
  % will not be shown in the document

  \icmlkeywords{Machine Learning, ICML}

  \vskip 0.3in
]

% this must go after the closing bracket ] following \twocolumn[ ...

% This command actually creates the footnote in the first column listing the
% affiliations and the copyright notice. The command takes one argument, which
% is text to display at the start of the footnote. The \icmlEqualContribution
% command is standard text for equal contribution. Remove it (just {}) if you
% do not need this facility.

% Use ONE of the following lines. DO NOT remove the command.
% If you have no special notice, KEEP empty braces:
%\printAffiliationsAndNotice{}  % no special notice (required even if empty)
% Or, if applicable, use the standard equal contribution text:
\printAffiliationsAndNotice{\icmlEqualContribution}

\begin{abstract}
% User behavior sequences in recommender systems can be naturally represented as a seq\_len $\times$ emb\_dim matrix, which typically exhibits both sparsity and an underlying low-rank structure. Recent advances such as IFA have demonstrated that leveraging cross interactions between long user histories and candidate items can effectively capture user intent. Building upon this line of work, we introduce a principled and interpretable enhancement by incorporating the classical Singular Value Decomposition (SVD) into the modeling of user–item interactions.

% However, directly integrating SVD into neural ranking models is non-trivial because the decomposition is non-differentiable at singular value multiplicities and its naive backward pass is computationally expensive and unstable. To address these challenges, we propose a novel SVD-enhanced framework that redefines and optimizes the backward operator of SVD, enabling stable end-to-end gradient propagation. Extensive experiments confirm that integrating our optimized SVD pipeline not only improves recommendation accuracy but also brings interpretability and computational benefits, advancing the modeling of long user behavior sequences.

% through latent-factor modeling and related matrix factorization views of user–item interactions. 
Attention mechanism remains the defining operator in Transformers since it provides expressive global credit assignment, yet its $\mathcal{O}(N^2 d)$ time and memory cost in sequence length $N$ makes long-context modeling expensive and often forces truncation or other heuristics. Linear attention reduces complexity to $\mathcal{O}(N d^2)$ by reordering computation through kernel feature maps, but this reformulation drops the softmax mechanism and shifts the attention score distribution. In recommender systems, low-rank structure in matrices is not a rare case, but rather the default inductive bias in its representation learning, particularly explicit in the user behavior sequence modeling. Leveraging this structure, we introduce \textbf{SVD-Attention}, which is theoretically lossless on low-rank matrices and preserves softmax while reducing attention complexity from $\mathcal{O}(N^2 d)$ to $\mathcal{O}(Ndr)$. With SVD-Attention, we propose \textbf{SOLAR}, SVD-Optimized Lifelong Attention for Recommendation, a sequence modeling framework that supports behavior sequences of ten-thousand scale and candidate sets of several thousand items in cascading process without any filtering. In Kuaishou's online recommendation scenario, SOLAR delivers a 0.68\% Video Views gain together with additional business metrics improvements.

\end{abstract}

% \keywords{LaTeX template, ACM CCS, ACM}

% Section I
\input{body/intro}    % basic introduction
\input{body/background}
\input{body/pre}
\input{body/methodology}
\input{body/results}

\input{body/conclusion}

\bibliographystyle{ACM-Reference-Format}
\bibliography{example_paper}

% % --- Appendix ---%
\newpage
\onecolumn
\appendix
\input{appdix/Appdix-SVD-Details}

\input{appdix/Appdix-Theory-Detail}
\end{document}

%% file: body/intro.tex
\section{Introduction}
\label{sec:intro}

Attention has become the dominant primitive for learning global interactions in modern architectures. The Transformer \cite{vaswani2023attentionneed} replaces recurrence by using attention mechanisms, which enable learning dependencies between elements based on their content, and the same operator now underpins competitive models across modalities, including vision systems built on tokenized patches and hierarchical windows \cite{dosovitskiy2021imageworth16x16words, liu2021swintransformerhierarchicalvision}. This success has a concrete cost: standard softmax attention couples expressivity to an $\mathcal{O}(N^2d)$ time and memory footprint in sequence length $N$, which turns long-context modeling into an engineering problem rather than a modeling choice.

\begin{figure}
    \centering
    \includegraphics[width=1.0\linewidth]{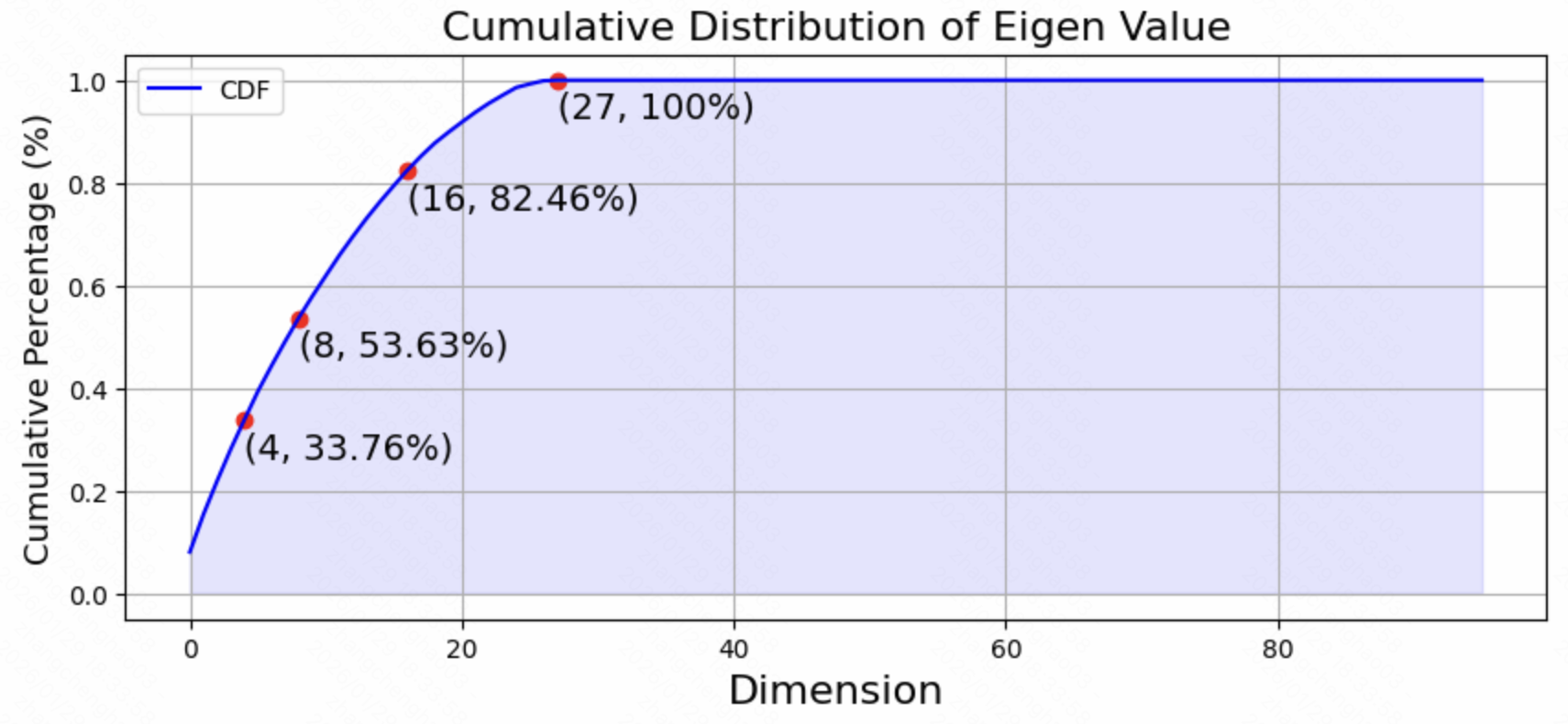}
    \caption{Low rank nature of user sequence representation. This figure shows the cumulative distribution of eigenvalues from SVD decomposition. At rank 27, all information is captured.}
    \label{fig:low-rank-analysis}
\end{figure}
A large body of work~\cite{wang2020linformerselfattentionlinearcomplexity, kitaev2020reformerefficienttransformer} attempts to decouple accuracy from quadratic scaling, but most solutions quietly change the operator being optimized. Sparse and truncated schemes reduce compute by discarding interactions~\cite{draye2025sparseattentionposttrainingmechanistic}, which is often acceptable in tasks where local evidence dominates but becomes fragile when weak signals distributed across the sequence matter. Linear attention~\cite{katharopoulos2020transformersrnnsfastautoregressive} takes a more radical route by reordering computations through kernel feature maps, reducing complexity to $\mathcal{O}(N d^2)$, yet it does so by removing the softmax normalization that shapes attention as a competitive allocation mechanism. The resulting score distribution shift is not cosmetic; it can induce overly smooth attention patterns and systematically underweight high-magnitude keys, an effect analyzed as magnitude neglect \cite{fan2025rectifyingmagnitudeneglectlinear}. 

% In settings where small ranking perturbations translate into measurable business impact, approximations that implicitly redesign the attention distribution are hard to justify as a default.

Low-rank structure offers a different lever because it targets redundancy in the data geometry rather than approximating the attention rule. This low-rank structure appears in many scenarios: Figure \ref{fig:low-rank-analysis} illustrates the low-rank phenomenon in user behavior sequences within recommendation systems. Low-rank factorization is a long-standing tool for capturing shared latent factors in recommendation and beyond \cite{5197422}, and it continues to reappear as a practical bias in large-scale representation learning, including parameter-efficient adaptation \cite{hu2021loralowrankadaptationlarge} and attention-side compression strategies in large models \cite{deepseekai2024deepseekv2strongeconomicalefficient}. For recommender systems, the case for low-rank is not merely aesthetic: scaling analyses show that embedding representations tend to collapse onto low-dimensional subspaces as models grow, producing embedding matrices that behave increasingly like low-rank objects \cite{guo2024embeddingcollapsescalingrecommendation}. This empirical bias suggests that the dominant cost in attention-based sequence modeling is often paid to process redundancy.

The tension is most visible in industrial recommendation pipelines, where models must score large candidate sets against long user histories under strict latency constraints. Target attention architectures such as DIN \cite{zhou2018deepnetworkclickthroughrate} and DIEN \cite{zhou2018deepevolutionnetworkclickthrough} highlight the value of candidate-conditioned sequence aggregation, yet they typically operate on aggressively truncated behavior windows. Systems that push sequence length to the lifelong scale usually rely on retrieval or filtering to keep inference feasible, as illustrated by search-based interest modeling \cite{qi2020searchbasedusermodelinglifelong} and two-stage interest networks \cite{chang2023twintwostagenetworklifelong}. These mechanisms improve throughput, but they also hard-code a selection policy that can suppress long-tail evidence and entangle modeling quality with the behavior retrieval heuristic.

This paper argues that the efficiency bottleneck can be reduced without abandoning softmax or filtering the sequence by exploiting low-rank structure explicitly. We introduce SVD-attention, which is theoretically lossless on low-rank matrices and preserves softmax while reducing attention complexity from $\mathcal{O}(N^2d)$ to $\mathcal{O}(N dr)$. Building on this operator, we propose \textbf{SOLAR}, \textbf{S}VD-\textbf{O}ptimized \textbf{L}ifelong \textbf{A}ttention for 
\textbf{R}ecommendation, a sequence modeling framework that supports behavior sequences of ten-thousand scale and candidate sets of several thousand items in cascading process, without any filtering. SOLAR is designed for set-conditioned ranking, where the model must assign scores that are meaningful under competition among candidates rather than under independent point-wise evaluation. This perspective aligns with our recently proposed \textbf{IFA} \cite{yu2024ifainteractionfidelityattention} which proposes set-wise architectures for lifelong behavior modeling, but replaces the distribution-shifting linearization with a low-rank formulation that targets the dominant computational term while maintaining the softmax mechanism. In Section~\ref{set-wise}, we theoretical prove that point-wise models face ranking errors due to structural limits when user preferences are context-dependent and they cannot mitigate the generalization penalty induced by correlation from the perspectives of Ranking Bias and Generalization Gap. In \textbf{Kuaishou}'s online deployment scenario, SOLAR delivers a 0.68\% Video Views gain together with additional improvements on other business metrics.

% \paragraph{Contributions.}
% Our main contributions are summarized as follows:
% \begin{itemize}
%     \item We propose \textbf{SVD-attention}, an exact softmax attention formulation that exploits low-rank structure in the shared key--value matrix, reducing the computational complexity from $O(N^2 d)$ to $O(N d r)$.

%     \item We develop \textbf{SOLAR}, a set-wise recommendation framework built upon SVD-attention, and provide theoretical analysis characterizing the limitations of point-wise scoring and the advantages of set-wise modeling in recommendation.

%     \item Extensive offline experiments and an online deployment demonstrate the effectiveness of SOLAR, yielding consistent improvements across benchmarks and a $0.68\%$ gain in Video Views in production.
% \end{itemize}
% \newpage
\paragraph{Contributions.}
Our contributions are as follows:
\begin{itemize}
\item We introduce SVD-Attention, a attention mechanism that exploits low-rank structure in the shared key--value matrix, reducing the complexity from $\mathcal{O}(N^2 d)$ to $\mathcal{O}(N d r)$ with rank-$r$ of representations.

\item We build SOLAR leveraging SVD-Attention, a set-aware sequence modeling framework for recommendation systems and provide a theoretical analysis that exposes the ranking bias and generalization penalty of point-wise scoring in set-wise recommendation. 
\end{itemize}
Extensive offline benchmarks and online deployment substantiate the approach. Our implemented framework has been deployed in Kuaishou APP, enabling the modeling of historical sequences with lengths on the order of tens of thousands, and contributing to business gains.

% \begin{figure*}
%     \centering
%     \includegraphics[width=1.0\linewidth]{images/download-7.png}
%     \caption{Enter Caption}
%     \label{fig:placeholder}
% \end{figure*}

\begin{figure*}
    \centering
    \begin{minipage}{\linewidth}
        \hspace{0.5cm} % 向右移动1cm
        \includegraphics[width=1.0\linewidth]{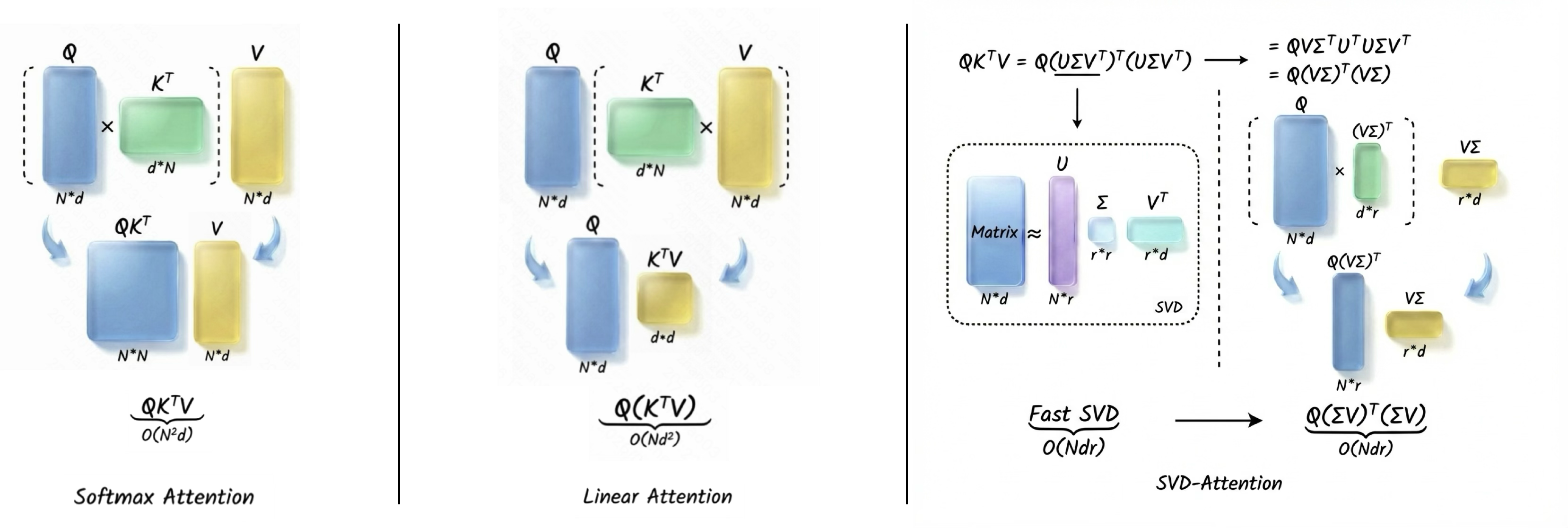} % 调整图片宽度
    \end{minipage}
    \caption{\textbf{Overview of Attention Complexity.} 
Softmax attention explicitly forms the dense attention matrix $QK^{\top}\in\mathbb{R}^{N\times N}$ and then multiplies by $V$, resulting in $O(N^{2}d)$ time complexity. 
Linear attention leverages associativity to rewrite the computation as $Q\,(K^{\top}V)$, avoiding the $N\times N$ matrix and reducing complexity to $O(Nd^{2})$. \textbf{SVD-Attention} applies a rank-$r$ SVD ($r\ll d$) to obtain a low-rank factorization. After reassembling the decomposed $U \Sigma V$, the computation of $U$ can theoretically be eliminated, as $U^\top U = I$, in low rank representations, which maintains the order of computation as in traditional attention, yielding an overall complexity of $O(Ndr)$.
}
% \textbf{SVD-Attention} applies a truncated rank-$r$ SVD ($r\ll d$) to obtain a low-rank factorization, and computes attention via the resulting low-rank factors, yielding an overall complexity of $O(Ndr)$.}

    \label{fig:placeholder}
\end{figure*}

%% file: body/background.tex
\section{Related Work}
\label{sec:relwork}

\subsection{Attention and Efficient Attention}

Transformers \cite{vaswani2023attentionneed} rely on softmax attention to express global interactions, but the quadratic dependence on sequence length has driven a steady stream of efficiency-oriented redesigns. One line of work enforces sparsity or locality in the attention pattern, using restricted receptive fields or sampling to avoid the full $N \times N$ interaction map, as in Longformer \cite{beltagy2020longformerlongdocumenttransformer} and Reformer \cite{kitaev2020reformerefficienttransformer}. Another line reduces tokens through clustering or low-dimensional projections, with representative examples including Set Transformer \cite{lee2019settransformerframeworkattentionbased} and Linformer \cite{wang2020linformerselfattentionlinearcomplexity}. Kernelized linear attention \cite{katharopoulos2020transformersrnnsfastautoregressive} removes the softmax normalization to enable associative reordering. These approaches deliver substantial speedups, yet they typically trade exactness for scalability, either by dropping interactions or by changing the normalization that governs how evidence competes. The distributional consequences are now well documented; \citeauthor{fan2025rectifyingmagnitudeneglectlinear} highlights a concrete failure mode where the reformulation alters attention sharpness and relative weighting in linear attention.

Low-rank structure has also been explored as an efficiency lever, particularly in recommendation systems where user histories exhibit redundancy. LREA \cite{Song_2025} compresses long-term behavior modeling through low-rank decomposition and matrix absorption to reduce serving cost. Our focus differs in the target object: rather than approximating attention through heuristic sparsification or replacing softmax, we derive an SVD-based attention formulation that is lossless on low-rank matrices and keeps the softmax mechanism intact.

\subsection{Sequential Modeling in Recommendation Systems}

Attention entered industrial CTR prediction primarily through target attention, where the candidate item queries a user behavior sequence. DIN \cite{zhou2018deepnetworkclickthroughrate} and DIEN \cite{zhou2018deepevolutionnetworkclickthrough} exemplify this design, while self-attentive recommenders such as SASRec \cite{kang2018selfattentivesequentialrecommendation} bring Transformer-style sequence modeling into recommendation and Information Retrieval. The central obstacle in industrial deployments is not modeling flexibility but the product of candidate size and sequence length under millisecond latency budgets, pushing most systems to cap the effective history or to introduce filtering. 

Search-based and two-stage pipelines are the prevailing workaround. SIM \cite{qi2020searchbasedusermodelinglifelong} scales to long histories by retrieving relevant behaviors through indexing, and TWIN \cite{chang2023twintwostagenetworklifelong} extends this idea with a consistency-preserved two-stage interest model that expands the effective sequence length from 100 to $10^4$--$10^5$. TWIN V2 \cite{Si_2024} further compresses life-cycle histories through hierarchical clustering to accommodate up to $10^6$ behaviors. End-to-end alternatives attempt to reduce the mismatch introduced by retrieval heuristics; ETA-Net \cite{chen2022efficientlongsequentialuser} replaces explicit retrieval with hashing-based mechanisms that allow efficient target attention over longer sequences. These systems demonstrate that long behavior is valuable, but the reliance on retrieval, compression, or approximate attention makes the final scorer contingent on a separate selection or aggregation policy.

Candidate-set modeling addresses a different limitation that becomes salient in cascading ranking scenarios: point-wise scoring treats candidates as independent even though the deployment objective is inherently set-conditioned. IFA \cite{yu2024ifainteractionfidelityattention} first formalizes this shift by performing cross-attention between the entire candidate set and the full lifelong behavior sequence, computing candidate-specific signals without filtering and treating the candidate set as a first-class input. This set-wise paradigm is a stronger match to ranking objectives, but its practical deployment still depends on an attention operator that scales and does not distort the score distribution. Recent work also points toward scaling sequence models in recommendation, including LONGER \cite{chai2025longerscalinglongsequence} and large-model directions motivated by scaling laws and generative recommenders \cite{zhang2024wukongscalinglawlargescale, zhai2024actionsspeaklouderwords}. Our work targets the core bottleneck shared by these trends: efficient set-wise interaction requires an attention mechanism that can process ten-thousand scale histories and thousand-scale candidate sets without resorting to lossy filtering or distribution-shifting linearization.

% \usepackage{cancel}

% \begin{align}
% Q K^{\top} V
% &= Q (U \Sigma V^{\top})^{\top} (U \Sigma V^{\top}) \\
% &= Q V \Sigma^{\top} \cancel{U^{\top} U} \Sigma V^{\top} \\
% &= Q (V \Sigma)^{\top} (V \Sigma)
% \end{align}

% \begin{align}
% \underbrace{\text{Fast SVD}}_{O(Ndr)}
% \;\longrightarrow\;
% \underbrace{Q (K^{\top} V)}_{O(Nd^2)}
% \end{align}

% $O(Ndr)$

%% file: body/pre.tex
%%%%%%%%%%%%%%%%%%%%%%%%%%%%%%%%%%%%%%%%%%%%%%%%%%%%%%%%%%%%%%%%%%%%%%%%%%%%%%%%

% 三个定理：第一个是泛化误差的定理，第二个是bayesion opimal的定理，也就是ifa建模方式能达到bayesion左右，din这种单点建模不行，第三个定理是要说明从架构上的候选集整体建模比listwise损失和pairwise损失要好
% listwise损失和pairwise损失本质上也是在显示的建模候选集内部关系，但是他们达不到bayesian optimal，

\section{Preliminaries}

Industrial rankers are typically built around per-item scoring. A request includes a user context and a candidate set produced by retrieval or pre-ranking, yet the model scores each candidate in isolation and then sorts by these scores.   This point-wise abstraction trades fidelity for convenience: it assumes an item’s relevance can be assessed without reference to the other candidates, which breaks whenever exposure is shared and user choice depends on the co-displayed candidates.  Set-wise scoring removes this constraint by allowing an item’s score to depend on the entire candidate set, aligning the scorer with the set-conditioned decision problem the system actually solves.  

\textbf{Terminology.} Consider a set-conditioned ranking problem where each user/context $u\in\mathcal{U}$ is related with a candidate set $X_u=\{x_{u,1},\dots,x_{u,m}\}\subset\mathcal{X}$. Let $\mathbf{y}_u = [y_{u,1}, \dots, y_{u,m}]^\top \in \{0,1\}^m$ denote the relevance vector.
Assume that $(u,X_u,\mathbf{y}_u)$ is drawn from distribution
$\mathcal{D}\sim\mathcal{U}\times\mathcal{X}^m\times\{0,1\}^m$. 

% --- 1. 定义打分函数 (The Object) ---
\begin{definition}[Scoring Functions]
\label{def:scoring_func}
A scoring function is a measurable mapping that assigns a real-valued score to each candidate item. We distinguish two types of hypothesis classes: \textbf{Point-wise} and \textbf{Set-wise} scoring functions.
\begin{align}
    f_{\mathrm{point}} : \mathcal{U}\times\mathcal{X} \to \mathbb{R},&
    \quad
    s_i = f_{\mathrm{point}}(u,x_{u,i}),\\
    f_{\mathrm{set}} : \mathcal{U}\times\mathcal{X}\times\mathcal{X}^m \to \mathbb{R},&
    \quad
    s_i = f_{\mathrm{set}}(u,x_{u,i},X_u),
\end{align}
The \textbf{point-wise} class is general in industrial applications (e.g., Dual-Encoders) due to its inference efficiency, but its local independence structurally prevents modeling contextual features. In contrast, the \textbf{set-wise} class allows scores to depend on $X$ at the cost of higher computational complexity.
\end{definition}

% --- 2. 定义误差指标 (The Objective) ---
Given predicted scores $s_i$, we quantify model performance with Bipartite Ranking Risk, which penalizes misordered positive-negative pairs.

\begin{definition}[Bipartite Ranking Risk]
\label{def:ranking_error}
Let $\mathcal{P}=\{i \mid y_i=1\}$ and $\mathcal{N}=\{j \mid y_j=0\}$ be the indices of relevant and irrelevant items, respectively. The ranking error of a scoring function $h$ is:
\begin{equation}
    \mathcal{R}(f)
    \;=\;
    \mathbb{E}_{\mathcal{D}}
    \left[
        \frac{1}{|\mathcal{P}||\mathcal{N}|} \sum_{i \in \mathcal{P}} \sum_{j \in \mathcal{N}} \mathbb{I}(s_j \ge s_i)
    \right],
\end{equation}
where the term is defined as $0$ if $\mathcal{P}=\emptyset$ or $\mathcal{N}=\emptyset$.
\end{definition}

Standard results in bipartite ranking theory establish that the optimal ranking is induced by the marginal posterior probability~\citep{menon2016bipartite}.

\begin{definition}[Bayes-optimal Scorer]
\label{def:bayes_optimal}
The Bayes-optimal relevance score $\eta^\star : \mathcal{U}\times\mathcal{X}\times\mathcal{X}^m \to [0,1]$ is defined as:
\begin{equation}
    \eta^\star(u,x_{u,i},X_u)
    \;=\;
    \mathbb{P}(y_i = 1 \mid u, x_{u,i}, X_u).
\end{equation}
The scoring function $f^\star(u,x_{u,i},X_u) = \eta^\star(u,x_{u,i},X_u)$ minimizes the pairwise ranking error $\mathcal{R}(f)$.
\end{definition}

\textbf{Attention Mechanism.} Let $L,m,d \in \mathbb{N}$ denote the length of the user behavior sequence,
the number of candidate items and the embedding dimension.
The user history sequence and candidate set are represented as
\begin{equation}
\begin{aligned}
    H &= [h_1^\top;\dots;h_{N_L}^\top] \in \mathbb{R}^{{N_L} \times d},\\
    C &= [c_1^\top;\dots;c_{N_C}^\top] \in \mathbb{R}^{{N_C} \times d},
\end{aligned}
\end{equation}
where $h_t, c_i \in \mathbb{R}^d$ denote the embeddings of historical
and candidate items. The projections are defined as:
\begin{equation}
    \mathrm{Query} = C W_Q, \mathrm{Key} = H W_K, \mathrm{Value} = H W_V,
\end{equation}
where $W_Q, W_K, W_V \in \mathbb{R}^{d \times d}$ are learnable parameters. The standard softmax-attention explicitly constructs the interaction matrix
between candidates and historical behaviors. Given
$\mathrm{Query}\in\mathbb{R}^{N_C\times d}$ and $\mathrm{Key}\in\mathbb{R}^{N_L\times d}$,
the multiplication $\mathrm{Query}\,\mathrm{Key}^\top$ incurs a cost of
$\mathcal{O}(N^2d)$. The subsequent multiplication with
$\mathrm{Value}\in\mathbb{R}^{N_{L}\times d}$ introduces an additional
$\mathcal{O}(N^2d)$ term. As a result, the overall computational complexity of
softmax-attention scales as $\mathcal{O}(N^2d)$:
\begin{equation}
\operatorname{Attn}_{\operatorname{sm}}:\operatorname{MatMul}(\operatorname{MatMul}(\mathrm{Query}, \operatorname{Key}^\top), \operatorname{Value}).
\end{equation}
In large-scale recommendation systems, this quadratic dependence on the
sequence and candidate sizes becomes a primary bottleneck. Practical models frequently employ linear
attention variants that reorder the computation to avoid explicitly
forming the $N_{C}\times N_{L}$ attention matrix,
reducing the complexity to $\mathcal{O}(N d^2)$, which defines as:
\begin{equation}
\operatorname{Attn}_{\operatorname{lin}}:\operatorname{MatMul}( \mathrm{Query},\operatorname{MatMul}( \operatorname{Key}^\top,\operatorname{Value})).
\end{equation}
This reordering discards the softmax normalization, losing its row-stochastic weighting and often harming stability and calibration in ranking, whereas our SVD-attention preserves the original softmax attention order while reducing the computation.

%% file: body/methodology.tex
\section{Methodology}

\subsection{SVD-Attention}

In attention mechanism, key and value are instantiated from the same matrix. This structure makes the remaining cost largely redundant when the behavior embeddings concentrate in a low-dimensional subspace. This cost is largely avoidable when the behavior embeddings lie close to a low-rank subspace in SVD-Attention.

Let $H$ denote the shared key--value matrix.
We compute a rank-$r$ truncated singular value decomposition of $H$:
\begin{equation}
    H = U\Sigma V^\top,
    U\in\mathbb{R}^{N_L\times r},\ \Sigma\in\mathbb{R}^{r\times r},\ V\in\mathbb{R}^{d\times r},
\end{equation}
with $r \ll \min(N_L,d)$, and  $U^\top U = I_r$.
Substituting into the attention operator yields
\begin{equation}
\begin{aligned}
    \,\mathrm{Key}^\top \mathrm{Value}\,
    &= (H W_{K})^\top (H W_{V}) \\
    &= (U\Sigma V^\top W_{K})^\top (U\Sigma V^\top W_{V}) \\
    &= W_{K}^\top\,V\Sigma^\top (U^\top U) \Sigma V^\top W_{V}\, \\
    &= ((V\Sigma)^\top W_{K}) ^\top (V\Sigma  )^\top W_{V}\, \\
\end{aligned}
\end{equation}
where we eliminate the $U \in \mathbb{R}^{N_L\times r}$ and we have: 
\begin{equation}
\begin{aligned}
\mathrm{Query} &= C W_{Q} \in \mathbb{R}^{N \times d}, \\
\mathrm{Key}_r & = (V\Sigma)^\top W_{K} \in \mathbb{R}^{r \times d}, \\
\mathrm{Value}_r &= (V\Sigma)^\top W_{V} \in \mathbb{R}^{r \times d}.
\end{aligned}
\end{equation}
Substituting this factorization into the attention computation gives same order of traditional attention:
% \begin{equation}
% \begin{aligned}
%     \mathrm{Attn}(\mathrm{Query},\mathrm{Key},\mathrm{Value})= \mathrm{Query}\,\mathrm{Key}_r^\top\mathrm{Value}_r\,
% \end{aligned}
% \end{equation}

\begin{equation}
\operatorname{Attn}_{\operatorname{SVD}}:\operatorname{MatMul}(\operatorname{MatMul}(\mathrm{Query}, \operatorname{Key}_r^\top), \operatorname{Value}_r)
\end{equation}

This evaluation avoids forming the dense $d\times d$ product. Computing
$\mathrm{Query}\,\mathrm{Key}_r$ costs $\mathcal{O}(Ndr)$ and multiplying by
$\mathrm{Value}_r^\top$ costs another $\mathcal{O}(Ndr)$, so the forward complexity is $\mathcal{O}(Ndr)$. A comparison of complexity is given in Table~\ref{tab:complexity_comparison}. 
\begin{table}[h]
    \centering
    \caption{Complexity comparison between attention mechanisms.}
    \label{tab:complexity_comparison}
    \begin{tabular}{lc}  % 第二列从 l 改为 c
        \toprule
        \textbf{Method} & \textbf{Time Complexity} \\
        \midrule
        Softmax Attention & $\mathcal{O}(N^2 d)$ \\
        Linear Attention & $\mathcal{O}(N d^2)$ \\
        \midrule
        \textbf{SVD Attention (Ours)} & $\mathcal{O}(N d r)$ \\
        \bottomrule
    \end{tabular}
\end{table}

When $r\ll d$, the SVD-based formulation significantly accelerates the
attention computation.

\subsubsection{Efficient SVD Forward Pass}

To avoid the expensive full SVD on $H\in\mathbb{R}^{L\times d}$,
we adopt \textbf{Randomized-SVD}~\citep{halko2011finding} with power iterations and the complexity of $O(Ndr)$. 

\begin{algorithm}[ht]
\caption{Randomized SVD with Power Iteration}
\label{alg:randsvd}
\begin{algorithmic}
\REQUIRE Matrix $H\in\mathbb{R}^{N_L\times d}$; target rank $r$; number of iterations $n_{\mathrm{iter}}$.
\ENSURE Singular values $s\in\mathbb{R}^r$ and right singular vectors $R\in\mathbb{R}^{d\times r}$.
\STATE Init $\Omega\in\mathbb{R}^{d\times r}$ with $\Omega_{ij}\sim\mathcal{N}(0,1)$.
\FOR{$i = 1$ to $n_{\mathrm{iter}}$}
    \STATE $\Omega \leftarrow H^\top (H \Omega)$ \hfill \textit{(power iteration)}
\ENDFOR
\STATE Compute basis $Q$ of $H\Omega$ via QR decomposition.
\STATE Compute SVD: $Q^\top H = U_S S R^\top$.
\STATE $s \leftarrow \mathrm{diag}(S)$.
\STATE \textbf{Return:} $s, R$.
\end{algorithmic}
\end{algorithm}

\subsubsection{Backward Pass Through SVD}

SVD operation is not trivially differentiable, hence not directly scalable for back propagation. Inspired by the matrix backpropagation framework\cite{ionescu2015matrix},
consider $H = U \Sigma V^\top$, where
$U\in\mathbb{R}^{N_L\times r}$, $\Sigma\in\mathbb{R}^{r\times r}$,
$V\in\mathbb{R}^{d\times r}$.
Let the upstream gradients be
$\bar{U} =\frac{\partial \mathcal{L}}{\partial U}$,
$\bar{\Sigma} =\frac{\partial \mathcal{L}}{\partial \Sigma}$,
$\bar{V} = \frac{\partial \mathcal{L}}{\partial V}$.
The gradient w.r.t.\ the input matrix $H$ (within truncated subspace) can be expressed as
{\small
\begin{equation}
\begin{aligned}
    &\frac{\partial L}{\partial H}
    =\bar{U}\Sigma^{-1}V^\top + U\operatorname{diag}\left(\bar{\Sigma}-U^{\top}\bar{U}\Sigma^{-1}\right)V^\top\\
    &+2U\Sigma\operatorname{sym}\left(F\circ(V^\top(\bar{V}-V(\bar{U}\Sigma^{-1})^\top U\Sigma))\right)V^\top
\end{aligned}
\end{equation}}
where $\circ$ denotes element-wise product, $\operatorname{diag}$ and $\operatorname{sym}$ denotes diagonal and symmetric operation. $F\in\mathbb{R}^{r\times r}$ is defined as
\begin{equation}
    F_{ij} =
    \begin{cases}
        \dfrac{1}{\sigma_i^2 - \sigma_j^2}, & i\neq j,\\[4pt]
        0, & i=j,
    \end{cases}
\end{equation}
with singular values $\{\sigma_i\}$ on the diagonal of $\Sigma$. Since SVD-attention avoids computation of matrix $U$, we safely conclude that $\bar{U} \equiv 0$ for any orthogonal matrix $U$, thus
{\small
\begin{equation}\label{eq:svd_grad_main}
    \frac{\partial L}{\partial H}=U\left[(\bar{\Sigma})_{\operatorname{diag}}+2\Sigma\left(F\circ(V^\top\bar{V})\right)_{\operatorname{sym}}\right]V^\top
\end{equation}}

The detailed process is provided in Appendix~\ref{app:svd_derivation}. This formulation provides a stable and well-defined gradient for the SVD
operation during training, enabling end-to-end learning with the
SVD-accelerated attention mechanism.

% ============================================================
\subsection{Set-Wise Architectures}
\label{set-wise}
\begin{figure}[htbp]
    \centering
    \includegraphics[width=1\linewidth]{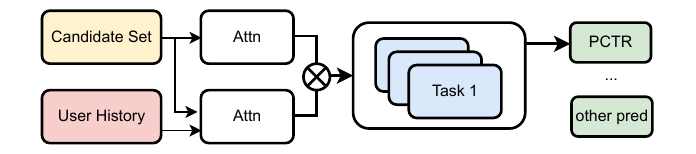}
    \caption{Illustration of SOLAR, an architecture that applies SVD-
Attention to historical sequence modeling and includes set-wise modeling of candidate set.}
    \label{fig:placeholder}
\end{figure}

We propose SOLAR, SVD-Optimized Lifelong Attention for Recommendation, an architecture that applies SVD-Attention to historical sequence modeling and includes set-wise modeling of candidate sets, achieving high efficiency in large-scale recommendation tasks. This subsection establishes the theoretical superiority of Set-wise architectures over Point-wise baselines. We analyze from the perspectives of \textbf{Ranking Bias} and \textbf{Generalization Gap} to prove that: 
\begin{enumerate}
    \item When user preferences are context-dependent, Point-wise models suffer from irreducible ranking errors due to structural limitations; 
    \item Even trained with List-wise losses, Point-wise models cannot mitigate the generalization penalty induced by correlations. 
\end{enumerate}
\subsubsection{Ranking Bias: Point-wise Bottleneck}
\label{sec:ranking_power}

The following analysis demonstrate that a point-wise scoring function is incapable of modeling context-dependent relevance even when optimizing a ranking-consistent objective (e.g., pairwise loss). First, we formalize the phenomenon where ground-truth preference reverses due to context changes.

\begin{definition}[Contextual Flip]
\label{assump:contextual_flip}
$\exists~x_i, x_j \in \mathcal{X}$ with two candidate sets $X^{(1)}, X^{(2)} \subset \mathcal{X}$ appearing with non-zero probability, such that:
\begin{equation}
\begin{aligned}
\eta^\star(x_i, X^{(1)}) &> \eta^\star(x_j, X^{(1)}), \\
\eta^\star(x_i, X^{(2)}) &< \eta^\star(x_j, X^{(2)}).
\end{aligned}
\end{equation}
\end{definition}
Definition~\ref{assump:contextual_flip} captures a ubiquitous property in realistic scenarios: the relative desirability of item $i$ versus $j$ is not intrinsic but contingent on its surrounding candidates $X$.

We then evaluate ranking capability via $\mathcal{R}(f)$ in Def.~\ref{def:ranking_error}. Since directly optimizing it is intractable, we consider the standard pairwise surrogate. Let $\pi^\star_{ij}(X) = \mathbb{I}(\eta^\star(x_i, X) > \eta^\star(x_j, X))$ be the ideal preference.
For a point-wise scorer $f \in \mathcal{F}_{\mathrm{point}}$, the surrogate population risk is:
\begin{equation}
\label{eq:pair_risk_point}
R(f) = \mathbb{E}_{X}\left[ \sum_{i \neq j} \mathcal{L}_{\text{BCE}}\big( f(x_i) - f(x_j), \pi^\star_{ij}(X) \big) \right]
\end{equation}
where $\mathcal{L}_{\text{BCE}}$ is the binary cross-entropy loss. The following theorem characterizes the asymptotic limit of a point-wise model trained on this objective (proof in Appendix~\ref{pf:pairwise_point_bayes}).

\begin{theorem}[Bayes Limit of Point-wise Scorers]
\label{thm:pairwise_point_bayes}
Let 
\begin{equation}
p_{ij} = \mathbb{P}\big(\eta^\star(x_i, X) > \eta^\star(x_j, X)\big) = \mathbb{E}_X[\pi^\star_{ij}(X)]
\end{equation}
denote the global marginal preference probability of item $i$ over $j$.
The minimizer $f^\star \in \arg\min_{f \in \mathcal{F}_{\mathrm{point}}} R(f)$ satisfies:
\begin{equation}
\label{eq:pairwise_bayes_solution}
f^\star(x_i) - f^\star(x_j)=\log \frac{p_{ij}}{1 - p_{ij}},\forall i,j, p_{ij} \in (0,1).
\end{equation}
Equivalently, $\sigma(f^\star(x_i) - f^\star(x_j)) = p_{ij}$.
\end{theorem}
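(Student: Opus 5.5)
The plan is to exploit that a point-wise scorer cannot see $X$, so the expectation over $X$ can be pushed inside the loss. Write $\mathcal{L}_{\text{BCE}}(z,\pi) = -\pi\log\sigma(z) - (1-\pi)\log(1-\sigma(z))$. Since $f(x_i)-f(x_j)$ does not depend on $X$ and the loss is affine in its label $\pi$, linearity of expectation gives, for each ordered pair,
\begin{equation*}
\mathbb{E}_X\big[\mathcal{L}_{\text{BCE}}(f(x_i)-f(x_j),\pi^\star_{ij}(X))\big] = \mathcal{L}_{\text{BCE}}\big(f(x_i)-f(x_j),\,p_{ij}\big),
\end{equation*}
with $p_{ij}=\mathbb{E}_X[\pi^\star_{ij}(X)]$. (If the index set of pairs varies with $X$, I would fix attention to a pair $(x_i,x_j)$ and condition on the event that both appear, defining $p_{ij}$ under that conditional law.) Thus $R(f)=\sum_{i\neq j}\mathcal{L}_{\text{BCE}}(z_{ij},p_{ij})$ with $z_{ij}=f(x_i)-f(x_j)$.

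Next I minimize each term pointwise in $z$. The map $z\mapsto -p\log\sigma(z)-(1-p)\log(1-\sigma(z))$ is strictly convex for $p\in(0,1)$. Its derivative is $\sigma(z)-p$, so the unique minimizer is $z=\log\frac{p}{1-p}$, i.e. $\sigma(z)=p$. Any $f$ attaining this for every pair simultaneously therefore minimizes $R$. Conversely, any minimizer must attain it, provided such an $f$ exists.

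The main obstacle is consistency: the differences $z_{ij}$ must come from a single potential $f$. This requires $z_{ij}=-z_{ji}$ and additivity $z_{ij}+z_{jk}=z_{ik}$. Antisymmetry holds automatically when ties have probability zero, since then $p_{ji}=1-p_{ij}$ and the logit is odd. Additivity is not automatic. I would handle it by reading the theorem, as the paper intends, under the realizability condition that the log-odds $\log\frac{p_{ij}}{1-p_{ij}}$ are transitive, i.e.\ a Bradley--Terry structure. The cleanest case is two items $x_i,x_j$ (the setting of Definition~\ref{assump:contextual_flip}), where only one free difference exists and no constraint arises.

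Under that condition, $f^\star$ is constructed by fixing $f^\star(x_1)=0$ and setting $f^\star(x_k)=\log\frac{p_{k1}}{1-p_{k1}}$. By strict convexity of the per-pair terms, every minimizer agrees with it up to an additive constant, which yields \eqref{eq:pairwise_bayes_solution}. I would also remark that pairs with $p_{ij}\in\{0,1\}$ have no finite minimizer and are excluded, matching the statement's restriction to $p_{ij}\in(0,1)$.
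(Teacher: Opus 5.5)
Your computation is the paper's proof. It too uses that $\Delta_{ij}=f(x_i)-f(x_j)$ does not depend on $X$ to replace $\pi^\star_{ij}(X)$ by $p_{ij}$ inside the BCE loss. It then solves the strictly convex scalar problem through the first-order condition $\sigma(\Delta_{ij})=p_{ij}$.

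Your extra consistency step repairs a genuine gap in the paper's argument. The paper's proof asserts that $R(f)$ ``can be decomposed into independent optimization problems for each pair'' and optimizes each $\Delta_{ij}$ as a free variable. It never states the realizability hypothesis you attribute to it. Because the $\Delta_{ij}$ are differences of a single potential, the per-pair optima are jointly attainable only under two conditions:
\begin{itemize}
\item ties have probability zero, so that $p_{ji}=1-p_{ij}$; this is needed because the ordered terms $(i,j)$ and $(j,i)$ in the sum involve the same $\Delta_{ij}$;
\item the log-odds add up around every cycle.
\end{itemize}

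Without these conditions the statement is false. Take candidate sets $\{a,b,c\}$, $\{a,b,d\}$, $\{a,c,d\}$, $\{b,c,d\}$ with probabilities $0.4,0.2,0.2,0.2$ and orders $a\succ b\succ c$, $b\succ a\succ d$, $c\succ d\succ a$, $d\succ c\succ b$.
\begin{itemize}
\item Every pair co-occurs in exactly two sets with opposite orders, so every conditional $p_{ij}$ lies in $(0,1)$.
\item Each pair term is therefore coercive, and the comparison graph is complete, so $R$ attains its minimum (up to an additive constant).
\item The target log-odds along $a\to b\to c\to a$ are $\log 2$, $\log 2$, $-\log 2$, which sum to $\log 2\neq 0$.
\end{itemize}
Hence no $f\in\mathcal{F}_{\mathrm{point}}$ satisfies the claimed identity on all three pairs, even though a minimizer exists.

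Your version is the same calculation plus one observation: the lower bound $R(f)\ge\sum_{i\neq j}\min_z\mathcal{L}_{\text{BCE}}(z,p_{ij})$ is attained exactly under the additive-log-odds (Bradley--Terry) condition. That is the correct form of the theorem, and it also gives uniqueness of $f^\star$ up to an additive constant. The paper's shorter argument is valid only in that realizable case, e.g.\ for two items.
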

It reveals that a point-wise model collapses context-dependent preferences into a single averaged scalar. This leads to inevitable errors when the local preference in a realized set deviates from the global average.
\begin{corollary}[Irreducible Ranking Risk(Proof in Appendix~\ref{pf:context_flip_pairwise})]
\label{cor:context_flip_pairwise}
Under Assumption~\ref{assump:contextual_flip}, any point-wise scorer $f_{\text{point}} \in \mathcal{F}_{\mathrm{point}}$ incurs a strictly positive Bipartite Ranking Risk, i.e.,
\begin{equation}
\mathcal{R}(f_{\operatorname{point}}) \;>\; 0.
\end{equation}
Crucially, this lower bound holds for any pointwise scoring function $f_{\operatorname{point}}$, regardless of its objectives. In contrast, the optimal set-wise scorer $f_{\operatorname{set}}^\star(x, X) = \eta^\star(x, X)$ achieves zero risk.
\end{corollary}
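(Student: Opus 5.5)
The plan is a direct pigeonhole argument on the two contexts supplied by Definition~\ref{assump:contextual_flip}. Theorem~\ref{thm:pairwise_point_bayes} is not needed for the lower bound; it only motivates it. Fix the pair $x_i,x_j$ and the sets $X^{(1)},X^{(2)}$ from the definition, and let $q_k=\mathbb{P}(X=X^{(k)})>0$. Any $f_{\mathrm{point}}$ assigns the context-free scores $s_i=f(x_i)$ and $s_j=f(x_j)$. I suppress $u$, as the definition does, or fix a $u$ of positive probability.

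We split into two cases according to the sign of $s_i-s_j$.
\begin{itemize}
\item If $s_i\le s_j$, the scorer ranks $j$ weakly above $i$. In context $X^{(1)}$ the truth prefers $i$, so we call $X^{(1)}$ the bad context, with $a=i$, $b=j$.
\item If $s_i>s_j$, the scorer ranks $i$ above $j$. In context $X^{(2)}$ the truth prefers $j$, so $X^{(2)}$ is the bad context, with $a=j$, $b=i$.
\end{itemize}
Ties count as errors because the loss is $\mathbb{I}(s_j\ge s_i)$, so the case $s_i=s_j$ falls into the first case. In both cases there is a context $X^{(k)}$ of probability $q_k>0$ with $\eta^\star(x_a,X^{(k)})>\eta^\star(x_b,X^{(k)})$ while $s_b\ge s_a$.

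Next I will show that the misordered label event has positive probability in the bad context. The subtlety is that $\eta^\star$ only controls marginals, while we need the joint event $\{y_a=1,y_b=0\}$. This is handled by the elementary inequality
\begin{equation*}
\mathbb{P}(y_a=1,y_b=0\mid X^{(k)})\;\ge\;\mathbb{P}(y_a=1\mid X^{(k)})-\mathbb{P}(y_b=1\mid X^{(k)})\;=\;\eta^\star(x_a,X^{(k)})-\eta^\star(x_b,X^{(k)})\;=:\;\delta>0.
\end{equation*}
On this event we have $a\in\mathcal{P}$ and $b\in\mathcal{N}$, so both sets are nonempty and the normalized sum is defined. The summand $\mathbb{I}(s_b\ge s_a)=1$ appears with weight $1/(|\mathcal{P}||\mathcal{N}|)\ge 1/m^2$. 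All other summands are nonnegative. Therefore
\begin{equation*}
\mathcal{R}(f_{\mathrm{point}})\;\ge\;q_k\,\delta/m^2\;>\;0.
\end{equation*}
The argument never used how $f$ was trained, which gives the ``regardless of its objectives'' clause.

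For the set-wise contrast, $f^\star_{\mathrm{set}}=\eta^\star(x,X)$ orders every pair according to $\pi^\star_{ij}(X)$ in every context. Hence its pairwise disagreement with the ideal preference is zero. The main obstacle is interpreting ``zero risk'' correctly. When $\eta^\star\in(0,1)$, the raw $\mathcal{R}$ of $f^\star_{\mathrm{set}}$ is not literally zero, because of label noise. I would therefore state and prove the claim as follows: relative to the target ranking $\pi^\star$, or equivalently when $\eta^\star$ is $\{0,1\}$-valued, $f^\star_{\mathrm{set}}$ incurs zero misordering. Under the same reading, the point-wise lower bound becomes $q_k$ times a positive constant. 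This is the Bayes-optimality statement of Definition~\ref{def:bayes_optimal}, evaluated against $\pi^\star$.
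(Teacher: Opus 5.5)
Your proof is correct and follows the same route as the paper's appendix argument: the sign of $f(x_i)-f(x_j)$ does not depend on the context, so it must contradict the flipped true ordering in one of the two positive-probability contexts. Your version is more careful than the paper's in two places. First, the inequality $\mathbb{P}(y_a=1,y_b=0\mid X^{(k)})\ge \eta^\star(x_a,X^{(k)})-\eta^\star(x_b,X^{(k)})$ supplies the step from a misordered preference to a nonzero realized-label indicator, which the paper simply asserts. Second, your caveat on the set-wise half is warranted: the paper never proves it, and under label noise $\eta^\star$ does not literally attain $\mathcal{R}=0$.
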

% \begin{proof}
%     See Appendix~\ref{pf:context_flip_pairwise}
% \end{proof}

\begin{table*}[!t]
\centering
\caption{Main results on RecFlow, MIND, and Kuaishou. An 0.001-level improvement is considered significant for industrial CTR prediction , as supported by experiments in paper \cite{ESMM, Si_2024, WangSCJLHC21} }
\label{main-result}
\resizebox{1.0\textwidth}{!}{
\begin{tabular}{lcccccccccc}
\toprule
Dataset & Metric 
& \shortstack{DIN(Recent 50) \\ \cite{zhou2018deepnetworkclickthroughrate}} 
& \shortstack{SIM \\ \cite{qi2020searchbasedusermodelinglifelong}} 
& \shortstack{LREA \\ \cite{Song_2025}} 
& \shortstack{LONGER \\ \cite{chai2025longerscalinglongsequence}} 
& \shortstack{TWIN \\ \cite{chang2023twintwostagenetworklifelong}} 
& \shortstack{TWINv2 \\ \cite{Si_2024}} 
& \shortstack{IFA \\ \cite{yu2024ifainteractionfidelityattention}} 
& \shortstack{SOLAR \\ ours} \\
\midrule
\multirow{2}{*}{RecFlow}
  & AUC $\uparrow$       &  0.6048 &  0.6113 &   0.5695 &  0.6676 &  0.6057 &  0.6178 & 0.6769  & 0.6812 \\
  & Logloss $\downarrow$ &  0.0689 &  0.0673 &  0.1475 & 0.0638 & 0.0687 &  0.0691 & 0.0615  & 0.0610 \\
\midrule
\multirow{2}{*}{MIND}
  & AUC $\uparrow$       &  0.5909 & 0.6193 & 0.6211 & 0.6355 & 0.6480 & 0.6377 & 0.6697 & 0.6713  \\
  & Logloss $\downarrow$ &  0.1324 & 0.1156 & 0.1065 & 0.1137 & 0.1075 & 0.1164 & 0.1060 &  0.1052 \\
\midrule
\multirow{2}{*}{Kuaishou}
& AUC($mean^{\pm{std}}$) $\uparrow$       & $0.8417^{\pm 0.00487}$ & $0.8458^{\pm 0.00512}$ & $0.8479^{\pm 0.00523}$ & $0.8511^{\pm 0.00413}$ & $0.8507^{\pm 0.00508}$ & $0.8492^{\pm 0.00492}$ & $0.8518^{\pm 0.00569}$ & $0.8531^{\pm 0.00589}$ \\
& UAUC($mean^{\pm{std}}$) $\uparrow$ & $0.8398^{\pm 0.00475}$ & $0.8439^{\pm 0.00515}$ & $0.8460^{\pm 0.00519}$ & $0.8486^{\pm 0.00405}$ & $0.8488^{\pm 0.00495}$ & $0.8473^{\pm 0.00487}$ & $0.8489^{\pm 0.00504}$ & $0.8502^{\pm 0.00530}$ \\
\bottomrule
\end{tabular}
}
\end{table*}

% ============================================================
\subsubsection{Generalization Bounds}

Consider a scenario of $N$ requests. For each request $u\in\{1,\dots,N\}$, a candidate set
$X_u=\{x_{u,1},\dots,x_{u,m}\}\subset\mathbb{R}^d$ is provided, with each candidate possessing a binary label $y_{u,i}\in\{0,1\}$.
We focus on a linear scoring head $s_{u,i}=w^\top \phi (u,x_{i},X_u(\text{if set-wise}))$ with
$\|w\|_2\le W$, and assume bounded representations $\|\phi(\cdot)\|_2\le B$.

For any tuple $(u,X_u,\mathbf{y}_u)\in \mathcal{D}$ with $s_i=f(u,x_i,X_u)$, 
% we consider three standard surrogate objectives:
% \begin{align} 
% \ell_{\operatorname{point}} &= - \sum_{i \in \mathcal{P}} \log \sigma(s_i) - \sum_{j \in \mathcal{N}} \log (1 - \sigma(s_j)), \\
% \ell_{\mathrm{pair}} &= \sum_{i \in \mathcal{P}} \sum_{j \in \mathcal{N}} \log(1 + \exp(s_j - s_i)), \\
% \ell_{\mathrm{list}} &= - \sum_{i \in \mathcal{P}} s_i + |\mathcal{P}| \log\left( \sum_{k=1}^{m} \exp(s_k) \right). \end{align}
let $\ell(\cdot)$ be a $1$-Lipschitz loss function w.r.t.\ $s_i$ and takes values in $[0,1]$. Denote population and empirical risk by
\begin{equation}
\begin{aligned}
R(f)&=\mathbb{E}_{\mathcal{D}}\big[\ell(f(u, x_i, X_u), y_{u,i})\big]\\
\hat R_S(f)&=\frac{1}{|S|}\sum_{(u,i)\in S}\ell(f(u,x_{u,i},X_u),y_{u,i})
\end{aligned}
\end{equation}
where $S$ denotes the training sample set. We utilize the standard bound~\cite{bartlett2002rademacher}: for any hypothesis class $\mathcal{F}$,
with probability at least $1-\delta$,
\begin{equation}
\label{eq:std_gen}
R(f)\;\le\;\hat R_S(f)\;+\;2\,\mathfrak{R}_S(\ell\circ\mathcal{F})
\;+\;3\sqrt{\frac{\ln(2/\delta)}{2T}},
\end{equation}
where $T$ is the number of independent sampling units.
\begin{theorem}[I.I.D., proof in Appendix~\ref{pf:point}]
\label{thm:point}
Assume all training samples $(u,i)$ are i.i.d. drawn. Let $\mathcal{F}=\{(u,x,X)\mapsto w^\top \phi(u,x,X):\|w\|_2\le W, \|\phi(\cdot)\|\leq B\}$.
Then the Rademacher complexity satisfies
\begin{equation}
\label{eq:rad_point}
\mathfrak{R}_S(\ell\circ\mathcal{F})
\;\le\;
\frac{W B}{\sqrt{mN}}.
\end{equation}
Consequently, with probability at least $1-\delta$, for all $f\in\mathcal{F}$,
\begin{equation}
\label{eq:gen_point_thm}
R(f)-\hat R_S(f)
\;\le\;
\frac{2WB}{\sqrt{mN}}
\;+\;
3\sqrt{\frac{\ln(2/\delta)}{2mN}}.
\end{equation}
\end{theorem}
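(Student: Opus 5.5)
The plan is to bound the empirical Rademacher complexity of the linear class directly and then substitute it into the standard bound \eqref{eq:std_gen} with $T=mN$ sampling units. Under the i.i.d.\ assumption, the training set $S$ consists of $T=mN$ independent pairs $(u,i)$, each with a feature vector $\phi_{u,i}:=\phi(u,x_{u,i},X_u)$ satisfying $\|\phi_{u,i}\|_2\le B$. The argument has three steps.

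\textbf{Step 1: remove the loss.} Because $\ell$ is $1$-Lipschitz in its score argument, Talagrand's contraction lemma gives $\mathfrak{R}_S(\ell\circ\mathcal{F})\le \mathfrak{R}_S(\mathcal{F})$. The label enters $\ell$ only as a fixed second argument for each sample, so contraction applies coordinatewise with the per-sample functions $s\mapsto \ell(s,y_{u,i})$. Each of these is $1$-Lipschitz.

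\textbf{Step 2: the linear-class computation.} With Rademacher signs $\epsilon_{u,i}$, Cauchy--Schwarz gives
\begin{equation*}
\mathfrak{R}_S(\mathcal{F})=\frac{1}{mN}\mathbb{E}_\epsilon\sup_{\|w\|\le W}w^\top\sum_{(u,i)}\epsilon_{u,i}\phi_{u,i}\le\frac{W}{mN}\,\mathbb{E}_\epsilon\Big\|\sum_{(u,i)}\epsilon_{u,i}\phi_{u,i}\Big\|_2 .
\end{equation*}
The supremum over $\phi$ within its norm ball is absorbed by using the realized features, each bounded by $B$. Jensen's inequality then gives $\mathbb{E}\|\sum\epsilon\phi\|\le(\sum\|\phi_{u,i}\|^2)^{1/2}$, because the cross terms vanish in expectation. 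This is at most $B\sqrt{mN}$, so $\mathfrak{R}_S(\mathcal{F})\le WB/\sqrt{mN}$, which is \eqref{eq:rad_point}.

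\textbf{Step 3: conclude.} Substituting this bound and $T=mN$ into \eqref{eq:std_gen} yields \eqref{eq:gen_point_thm} directly.

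The only step needing care, and the main obstacle, is justifying $T=mN$. The $mN$ pairs must count as independent units even though items in the same request share $u$ and $X_u$. The theorem's hypothesis grants exactly this independence, so I will simply invoke it and note that the cross-term cancellation in Step 2 relies on the Rademacher signs rather than on data independence. Independence is needed only for the concentration term in \eqref{eq:std_gen}.

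A secondary subtlety is that $\ell$ takes values in $[0,1]$, which is required for the McDiarmid step behind \eqref{eq:std_gen}. This is assumed in the setup, so no further work is needed there.
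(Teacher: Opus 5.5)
Your proposal is correct and matches the paper's proof step for step: Talagrand contraction to drop $\ell$, then $\sup_{\|w\|_2\le W}w^\top v=W\|v\|_2$, then Jensen with vanishing Rademacher cross terms to get $WB/\sqrt{mN}$, then substitution into \eqref{eq:std_gen} with $T=mN$. One correction: a supremum over all $B$-bounded $\phi$ cannot be ``absorbed'' by the realized features (on distinct sample points it would give $\mathfrak{R}_S(\mathcal{F})=WB$), so state instead, as the paper implicitly does, that $\phi$ is a fixed feature map and only $w$ is learned.
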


\begin{theorem}[Block Dependent, proof in Appendix~\ref{pf:block}]
\label{thm:block}
Let $\mathcal{F}=\{(u,x,X)\mapsto w^\top \phi(u,x,X):\|w\|_2\le W, \|\phi(\cdot)\|\leq B\}$. Assume an average correlation condition: for each request $u$,
\begin{equation}
\label{eq:rho_assump}
\frac{1}{m(m-1)}\sum_{i\neq j} z_{u,i}^\top z_{u,j}
\le
\rho B^2,
\ \ \rho\in[0,1],
\end{equation}
where $z_{u,i} = \phi(u, x_i, X)$. Then the Rademacher complexity satisfies
\begin{equation}
\label{eq:rad_block}
\mathfrak{R}_S(\ell\circ\mathcal{F})
\;\le\;
\frac{WB}{\sqrt{mN}}\sqrt{1+(m-1)\rho}.
\end{equation}
Consequently, with probability at least $1-\delta$, for all $f\in\mathcal{F}$,
{\small
\begin{equation}
\label{eq:gen_block_thm}
R(f)-\hat R_S(f)
\le\frac{2WB}{\sqrt{mN}}\sqrt{1+(m-1)\rho}
+3\sqrt{\frac{\ln(2/\delta)}{2N}}.
\end{equation}
}
\end{theorem}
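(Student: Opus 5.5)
We bound the empirical Rademacher complexity of the linear class directly, treating each request $u$ as one independent sampling unit. The $m$ items inside a request share the same Rademacher-relevant structure through their correlated features $z_{u,i}$, and the concentration term uses $T=N$ blocks rather than $mN$ samples.

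\textbf{Step 1: reduce to the linear class.} By Talagrand's contraction lemma, since $\ell$ is $1$-Lipschitz in the score, $\mathfrak{R}_S(\ell\circ\mathcal{F})\le \mathfrak{R}_S(\mathcal{F})$. Here $\mathfrak{R}_S$ is normalized by the $mN$ samples.

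\textbf{Step 2: dualize the linear class.} By Cauchy--Schwarz,
\begin{equation*}
\mathfrak{R}_S(\mathcal{F})=\frac{1}{mN}\mathbb{E}_\sigma\sup_{\|w\|\le W} w^\top\sum_{u,i}\sigma_{u,i} z_{u,i}\le \frac{W}{mN}\,\mathbb{E}_\sigma\Big\|\sum_{u}\sum_i \sigma_{u,i} z_{u,i}\Big\|_2 .
\end{equation*}
To reflect the block dependence, we model the signs so that they are independent across requests $u$. Within a block, the worst case permitted by the dependence is that the signs are fully shared, $\sigma_{u,i}=\sigma_u$. This is the choice that yields the $(m-1)\rho$ cross terms; with independent signs within a block the cross terms would vanish. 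We apply Jensen:
\begin{equation*}
\mathbb{E}\Big\|\sum_u \sigma_u \sum_i z_{u,i}\Big\|\le \Big(\sum_u \Big\|\sum_i z_{u,i}\Big\|^2\Big)^{1/2}.
\end{equation*}

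\textbf{Step 3: expand each block norm using the correlation assumption.} We expand
\begin{equation*}
\Big\|\sum_i z_{u,i}\Big\|^2=\sum_i\|z_{u,i}\|^2+\sum_{i\ne j} z_{u,i}^\top z_{u,j}\le mB^2+m(m-1)\rho B^2=mB^2\bigl(1+(m-1)\rho\bigr),
\end{equation*}
using \eqref{eq:rho_assump}. Summing over the $N$ requests gives $NmB^2(1+(m-1)\rho)$. Hence
\begin{equation*}
\mathfrak{R}_S(\mathcal{F})\le \frac{WB}{mN}\sqrt{mN(1+(m-1)\rho)}=\frac{WB}{\sqrt{mN}}\sqrt{1+(m-1)\rho},
\end{equation*}
which is \eqref{eq:rad_block}.

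\textbf{Step 4: assemble the generalization bound.} We plug this into the standard bound \eqref{eq:std_gen}. The number of independent sampling units is $T=N$, since only requests are independent. McDiarmid's inequality is applied to the per-request averaged loss, which is bounded in $[0,1]$ and changes by at most $1/N$ when one block is replaced. This gives the $3\sqrt{\ln(2/\delta)/(2N)}$ term and yields \eqref{eq:gen_block_thm}.

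\textbf{Main obstacle.} The delicate point is justifying the within-block sign coupling in Step 2. The standard symmetrization argument only requires independence across blocks, so we would carry out symmetrization at the block level. We would apply it to the block loss $\sum_i \ell(\cdot)$, drawing one Rademacher sign per request, and then reduce back with contraction. Contraction for vector-valued block losses needs care: we plan either to invoke the vector-contraction inequality, or to accept this coupled-sign formulation as the paper's definition of block Rademacher complexity.
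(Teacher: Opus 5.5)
Your outline follows the paper's proof in Appendix~\ref{pf:block} step for step: contraction, one Rademacher sign per request, Jensen, expansion of $\|\sum_i z_{u,i}\|^2$ via \eqref{eq:rho_assump}, and $T=N$ in \eqref{eq:std_gen}. The step you flag as the main obstacle is a genuine gap, and it is the same one the paper skips when it writes $\hat{\mathfrak{R}}_S(\ell\circ\mathcal{F}_{\mathrm{set}})\le\hat{\mathfrak{R}}_S(\mathcal{F}_{\mathrm{set}})$ ``similar to Theorem~\ref{thm:point}''.

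\textbf{Where the argument breaks.} Block-level symmetrization, which your $T=N$ McDiarmid step requires, controls $\frac{1}{mN}\mathbb{E}_\sigma\sup_w\sum_u\sigma_u\sum_i\ell(w^\top z_{u,i},y_{u,i})$. Steps~2--3 instead bound $\frac{1}{mN}\mathbb{E}_\sigma\sup_w\sum_u\sigma_u\, w^\top\sum_i z_{u,i}$. Talagrand's lemma does not connect the two, because $\sum_i\ell(s_i)$ is not a Lipschitz function of $\sum_i s_i$. If $\sum_i z_{u,i}=0$ for every $u$, the second quantity is $0$, while for $\ell(s,y)=\max(0,1-|s|)$ the first is positive.
\begin{itemize}
\item Your Step~1 is a valid contraction only for independent signs $\sigma_{u,i}$. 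Setting $\sigma_{u,i}=\sigma_u$ afterwards changes the object rather than bounding it. It is not a ``worst case'' either: the same example makes the coupled quantity vanish.
\item Vector contraction does not rescue this. It treats $s\mapsto\sum_i\ell(s_i)$ as $\sqrt m$-Lipschitz in $\ell_2$ and returns you to independent signs on all $mN$ coordinates. The cross terms then disappear and you get $\sqrt2\,WB/\sqrt N$, with no $\rho$.
\item Declaring the coupled-sign quantity to be ``the'' complexity also does not help. What enters \eqref{eq:std_gen} is the coupled-sign complexity of $\ell\circ\mathcal{F}$, not of $\mathcal{F}$.
\end{itemize}
What the route legitimately yields is obtained from $\sup_w\sum_i\le\sum_i\sup_w$ plus scalar contraction over $u$ for each fixed $i$. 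That gives $WB/\sqrt N$, which is exactly the $\rho=1$ case of \eqref{eq:rad_block}.

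\textbf{The statement itself fails, so the gap cannot be closed.} Let $m$ be even and draw $k(u)$ uniformly from $\{1,\dots,K\}$. Set $z_{u,i}=Be_{k(u)}$ for $i\le m/2$ and $z_{u,i}=-Be_{k(u)}$ otherwise. Take $\ell(s,y)=\max(0,1-|s|)$, which is $1$-Lipschitz with values in $[0,1]$.
\begin{itemize}
\item Then $\sum_{i\neq j}z_{u,i}^\top z_{u,j}=-mB^2$, so \eqref{eq:rho_assump} holds with $\rho=0$. Adding a shared orthogonal component of squared norm $B^2/m$ makes the average exactly zero without changing what follows.
\item Every item of request $u$ has loss $\max(0,1-B|w_{k(u)}|)$. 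Hence $R(w)\ge 1-BW/\sqrt K$ for all $\|w\|_2\le W$.
\item On a sample with distinct $k(u)$ (probability $\to1$ as $K\to\infty$), take $w=\frac{W}{\sqrt N}\sum_u e_{k(u)}$ with $BW=\sqrt N$. This gives $\hat R_S(w)=0$, so the gap tends to $1$.
\item The right-hand side of \eqref{eq:gen_block_thm} is $2/\sqrt m+3\sqrt{\ln(2/\delta)/(2N)}<1$ for large $m$ and $N$.
\end{itemize}
So under these hypotheses the valid conclusion is the $\rho=1$ bound $2WB/\sqrt N+3\sqrt{\ln(2/\delta)/(2N)}$. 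The $\rho$-dependence your Steps~2--3 correctly compute holds only for the block-sign complexity of the linear class $\mathcal{F}$ itself, not for $\ell\circ\mathcal{F}$.
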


\begin{corollary}[Mismatch factor and extreme regimes]
\label{cor:mismatch}
Under the same conditions,
\begin{equation}
\frac{\mathfrak{R}_{\mathrm{dep}}}{\mathfrak{R}_{\mathrm{i.i.d}}}\simeq
\sqrt{1+(m-1)\rho}.
\end{equation}
\end{corollary}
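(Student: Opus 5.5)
The corollary is essentially a direct quotient of the two Rademacher bounds already established, so the plan is to read it off from Theorem~\ref{thm:point} and Theorem~\ref{thm:block}.

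Write $\mathfrak{R}_{\mathrm{i.i.d}}$ for the bound \eqref{eq:rad_point}, which equals $WB/\sqrt{mN}$, and $\mathfrak{R}_{\mathrm{dep}}$ for the bound \eqref{eq:rad_block}. The common prefactor $WB/\sqrt{mN}$ cancels in the ratio and leaves exactly $\sqrt{1+(m-1)\rho}$. The symbol $\simeq$ should be read as a statement about these worst-case (bound-level) complexities rather than about the exact empirical Rademacher averages. To justify that this is not just an artifact of loose bounds, I would also argue that both bounds are tight up to constants for the linear class.

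\textbf{Tightness of the i.i.d. bound.} For the linear class, $\mathbb{E}_\epsilon\|\sum_k \epsilon_k z_k\|$ is, by Khintchine--Kahane, within a universal constant of $\sqrt{\sum_k\|z_k\|^2}$ when the $z_k$ carry independent signs. That quantity equals $B\sqrt{mN}$ when all norms equal $B$, so the i.i.d. bound is attained up to constants.

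\textbf{Tightness of the dependent bound.} In the block-dependent setting the effective randomization is one sign per request, giving a sum of the form $\sum_u \epsilon_u \sum_i z_{u,i}$. Its second moment is
\begin{equation*}
\sum_u \Big\|\sum_i z_{u,i}\Big\|^2 = \sum_u\Big(\sum_i\|z_{u,i}\|^2+\sum_{i\ne j} z_{u,i}^\top z_{u,j}\Big),
\end{equation*}
which is at most $NmB^2\bigl(1+(m-1)\rho\bigr)$ by \eqref{eq:rho_assump}. When \eqref{eq:rho_assump} holds with equality and the norms equal $B$, this is an equality, and Khintchine's lower bound matches it up to constants. Normalizing by the sample count $mN$ reproduces \eqref{eq:rad_block}.

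\textbf{Extreme regimes.} Once the ratio is in hand, the limiting cases follow by direct substitution. At $\rho=0$ the factor is $1$, so the block-dependent case reduces to the i.i.d. case. At $\rho=1$ the factor is $\sqrt m$, meaning the $m$ items of a request behave like a single sample and the effective sample size falls from $mN$ to $N$.

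The main obstacle is the lower-bound direction needed to justify $\simeq$ rather than mere $\le$. Handling it requires the contraction step through the $1$-Lipschitz loss, which only gives an upper bound. I would therefore state the equivalence at the level of the linear class $\mathcal{F}$ itself, before contraction, and cite Khintchine for the matching lower bound.
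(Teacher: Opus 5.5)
Your proposal is correct and follows the paper, which gives no separate proof and obtains the corollary directly as the quotient of \eqref{eq:rad_block} by \eqref{eq:rad_point}, with the common factor $WB/\sqrt{mN}$ cancelling. Your Khintchine--Kahane tightness argument is extra relative to the paper: it needs the additional assumptions of equal norms $B$ and equality in \eqref{eq:rho_assump}, and it works at the level of $\mathcal{F}$ before contraction. It is a valid way to give the $\simeq$ meaning beyond a ratio of upper bounds.
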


We then analyze whether optimizing a Listwise loss can circumvent the penalty derived in Corollary~\ref{cor:mismatch}. Let $\mathcal{P}_u = \{i : y_{u,i} = 1\}$ be the set of positive items for user $u$. The loss is defined as the average negative log-likelihood of the positive items:
\begin{equation}
\label{eq:listwise_loss_def}
\begin{aligned}
\ell_{\mathrm{list}}(s; \mathcal{P}_u&) = - \frac{1}{|\mathcal{P}_u|} \sum_{i \in \mathcal{P}_u} \log\left(\frac{\exp(s_i)}{\sum_{j=1}^m \exp(s_j)}\right) \\
&= - \frac{1}{|\mathcal{P}_u|} \sum_{i \in \mathcal{P}_u} s_i + \log\left(\sum_{j=1}^m \exp(s_j)\right).
\end{aligned}
\end{equation}
\begin{lemma}[Lipschitz Continuity, proof in Appendix~\ref{pf:lipschitz}]
\label{lem:listwise_lipschitz}
$\ell_{\mathrm{list}}$ is $L$-Lipschitz w.r.t. $s$ in $\ell_2$-norm, where $L \le \sqrt{2}$.
\end{lemma}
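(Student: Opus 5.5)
The plan is to bound the Lipschitz constant by the operator norm of the gradient, which exists since $\ell_{\mathrm{list}}$ is smooth in $s$. Because $\ell_{\mathrm{list}}$ is continuously differentiable on $\mathbb{R}^m$, the mean value theorem along the segment joining $s$ and $s'$ gives
\begin{equation}
|\ell_{\mathrm{list}}(s)-\ell_{\mathrm{list}}(s')| \le \sup_{t}\|\nabla_s \ell_{\mathrm{list}}(t)\|_2\,\|s-s'\|_2 .
\end{equation}
So it suffices to show that $\|\nabla_s \ell_{\mathrm{list}}(s)\|_2\le\sqrt{2}$ for every $s\in\mathbb{R}^m$ and every nonempty $\mathcal{P}_u$.

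The gradient comes from the second form of the loss in \eqref{eq:listwise_loss_def}. Define the uniform distribution on positives $q\in\mathbb{R}^m$ by $q_i=\mathbb{I}(i\in\mathcal{P}_u)/|\mathcal{P}_u|$, and the softmax distribution $p(s)$ by $p_j=\exp(s_j)/\sum_k\exp(s_k)$. Then
\begin{equation}
\nabla_s \ell_{\mathrm{list}}(s) = p(s)-q .
\end{equation}
Both $p$ and $q$ lie in the probability simplex $\Delta_m$.

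The key step is to bound $\|p-q\|_2$ for $p,q\in\Delta_m$. By the triangle inequality,
\begin{equation}
\|p-q\|_2\le\|p\|_2+\|q\|_2\le 2,
\end{equation}
but this is too weak. The sharper bound uses $\|p-q\|_2^2=\|p\|_2^2+\|q\|_2^2-2p^\top q$. Since $\|v\|_2^2\le\|v\|_1\|v\|_\infty\le 1$ on the simplex, and $p^\top q\ge 0$ because all entries are nonnegative, we get $\|p-q\|_2^2\le 2$. Hence $\|p-q\|_2\le\sqrt{2}$, with equality approached when $p$ and $q$ are distinct vertices, for example a single positive and all softmax mass on one other item.

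Combining this with the mean value theorem bound gives $L\le\sqrt{2}$. There is no real obstacle in this argument; the only point needing care is using the squared-norm expansion rather than the triangle inequality to reach $\sqrt{2}$ instead of $2$. We also need $\mathcal{P}_u\neq\emptyset$ so that the loss and $q$ are defined, consistent with the convention of Definition~\ref{def:ranking_error}.
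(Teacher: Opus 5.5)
Your proposal is correct and follows essentially the paper's argument. The paper also writes the gradient componentwise as $\pi_t - \mathbb{I}(t\in\mathcal{P}_u)/k$ and expands the squared norm into $\frac{1}{k} - \frac{2}{k}\pi_{\mathrm{pos}} + \|\pi\|_2^2 \le 2$, which is exactly your $\|q\|_2^2 - 2p^\top q + \|p\|_2^2$ in coordinates; the one small addition on your side is stating the mean-value-theorem step from a bounded gradient to Lipschitz continuity explicitly, where the paper leaves it implicit.
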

\begin{corollary}[Listwise Generalization Gap]
\label{thm:listwise_gap}
Let $\mathcal{C}(m,\rho) = \sqrt{1+(m-1)\rho}$, then for any hypothesis class trained with Softmax loss, the generalization gap is bounded by:
\begin{equation}
\label{eq:listwise_final_bound}
R(f)-\hat{R}_S(f) \lesssim \frac{2\sqrt{2} WB}{\sqrt{mN}}\mathcal{C}(m,\rho)+\mathcal{O}\left(\frac{1}{\sqrt{N}}\right).
\end{equation}
\end{corollary}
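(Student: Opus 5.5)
The plan is to combine \cref{lem:listwise_lipschitz} with the block-dependent Rademacher bound of \cref{thm:block}, using a vector-valued contraction inequality. The listwise loss is a function of the whole score vector $s_u=(s_{u,1},\dots,s_{u,m})\in\mathbb{R}^m$, not of one scalar score. So the scalar Talagrand contraction used for \eqref{eq:rad_point} does not apply directly. Instead, I will treat each request $u$ as one sampling unit, so that $T=N$ in \eqref{eq:std_gen}, and apply Maurer's vector contraction inequality. For $L$-Lipschitz maps $h_u:\mathbb{R}^m\to\mathbb{R}$ in $\ell_2$ it gives
\begin{equation*}
\mathbb{E}_\epsilon \sup_{f}\sum_{u}\epsilon_u h_u(f(X_u)) \;\le\; \sqrt{2}\,L\,\mathbb{E}_{\epsilon}\sup_f \sum_{u}\sum_{i=1}^m \epsilon_{u,i} f(u,x_{u,i},X_u).
\end{equation*}
By \cref{lem:listwise_lipschitz}, $h_u=\ell_{\mathrm{list}}(\cdot;\mathcal{P}_u)$ is $\sqrt{2}$-Lipschitz, so the constant on the right is at most $2$.

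The second step bounds the right-hand side for the linear class $f=w^\top\phi$. By Cauchy--Schwarz and Jensen,
\begin{equation*}
\mathbb{E}\sup_{\|w\|\le W} w^\top\sum_{u,i}\epsilon_{u,i}z_{u,i} \;\le\; W\Big(\sum_u \mathbb{E}\big\|\textstyle\sum_i \epsilon_{u,i} z_{u,i}\big\|^2\Big)^{1/2}.
\end{equation*}
This is exactly the quantity controlled in the proof of \cref{thm:block}. Within a block the cross terms $z_{u,i}^\top z_{u,j}$ are governed by the average-correlation assumption \eqref{eq:rho_assump}, which gives $\mathbb{E}\|\sum_i\epsilon_{u,i}z_{u,i}\|^2\le mB^2(1+(m-1)\rho)$. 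After normalizing by the total sample size $mN$, the result is $\frac{WB}{\sqrt{mN}}\,\mathcal{C}(m,\rho)$ times the contraction constant. Plugging this into \eqref{eq:std_gen} with the factor $2$ gives the leading term $\frac{2\sqrt{2}WB}{\sqrt{mN}}\mathcal{C}(m,\rho)$, up to the absolute constant from vector contraction. That constant is why the statement uses $\lesssim$.

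The third step handles the confidence term. Because requests are the independent units, McDiarmid's inequality applies over the $N$ blocks. The bounded-difference requirement needs $\ell_{\mathrm{list}}$ to be bounded on the score range $|s_i|\le WB$. In that range $\ell_{\mathrm{list}}\le 2WB+\log m$, and after rescaling to $[0,1]$ this yields $3\sqrt{\ln(2/\delta)/(2N)}=\mathcal{O}(1/\sqrt{N})$, matching \eqref{eq:gen_block_thm}.

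I expect the main obstacle to be the normalization and constant bookkeeping. The block-level contraction naturally normalizes by $N$, while the stated rate is per-sample with $\sqrt{mN}$. I need to check that averaging the per-request loss matches the per-item normalization used in \cref{thm:block}. I also need to check that the Rademacher variables introduced by vector contraction, which are independent across $i$ within a block, are compatible with the correlation assumption. If the constants do not align exactly, I would absorb the discrepancy into $\lesssim$, as the statement already allows.
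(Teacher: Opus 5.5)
Your argument goes through, and it departs from the paper exactly at the contraction step. The paper gives no separate proof of this corollary. The constant $2\sqrt{2}$ and the factor $\mathcal{C}(m,\rho)$ come from multiplying the block bound \eqref{eq:rad_block} by the Lipschitz constant $\sqrt{2}$ of Lemma~\ref{lem:listwise_lipschitz} (a scalar contraction, as in the proof of Theorem~\ref{thm:point}), then inserting the result into \eqref{eq:std_gen} with $T=N$. That scalar step is not actually available. The complexity in Theorem~\ref{thm:block} uses one sign $\sigma_u$ per request, so it only controls the scalar $\sum_i s_{u,i}$. But $\ell_{\mathrm{list}}$ is invariant under $s_u\mapsto s_u+c\mathbf{1}$, so it cannot be a Lipschitz function of that scalar. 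Maurer's vector contraction is the rigorous replacement. It costs a factor $\sqrt{2}$ (leading term $4WB/\sqrt{mN}$ instead of $2\sqrt{2}WB/\sqrt{mN}$), which $\lesssim$ may absorb. The paper's reading buys the exact constant and the explicit $\rho$-dependence used in the discussion after the corollary. Yours buys a valid contraction step, at the price that $\rho$ drops out, as explained next.

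Two claims in your write-up should be corrected, though neither breaks the proof.

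First, once vector contraction introduces independent signs $\epsilon_{u,i}$ inside each block, $\mathbb{E}\|\sum_i\epsilon_{u,i}z_{u,i}\|^2=\sum_i\|z_{u,i}\|^2\le mB^2$. The cross terms vanish, so this is \emph{not} the quantity $\|\sum_i z_{u,i}\|^2$ bounded in Theorem~\ref{thm:block}, and \eqref{eq:rho_assump} is never used. You recover the stated bound only through $\mathcal{C}(m,\rho)\ge 1$, which holds because $\rho\ge 0$.

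Second, the normalization cannot be left to $\lesssim$. With the per-request empirical risk $\frac{1}{N}\sum_u\ell_{\mathrm{list}}(s_u;\mathcal{P}_u)$, the same computation gives Rademacher complexity $2WB\sqrt{m/N}$. That is a factor $m$ above $WB/\sqrt{mN}$, which is not an absolute constant. You must commit to the per-item convention $\hat R_S(f)=\frac{1}{mN}\sum_u\ell_{\mathrm{list}}(s_u;\mathcal{P}_u)$, consistent with the $1/|S|$ normalization of $\hat R_S$ with $|S|=mN$; then your computation yields exactly $2WB/\sqrt{mN}$. In that convention each request contributes a loss in $[0,(2WB+\log m)/m]$. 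The confidence term is this range times $3\sqrt{\ln(2/\delta)/(2N)}$, which is still $\mathcal{O}(1/\sqrt{N})$. Note that rescaling to $[0,1]$ multiplies that term by the range rather than removing it.
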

Now that the average pairwise correlation $\rho$ governs the generalization error, which scales with $\mathcal{C}(m,\rho)$. This theoretical insight establishes that minimizing feature correlation is necessary for reducing generalization gap. Consequently, we investigate the capabilities of different scoring architectures in managing this correlation.

Consider a pair of items $(i, j)$ within a candidate list. Let ${x}_i, {x}_j \in \mathbb{R}^d$ denote input feature vectors. In ranking scenarios, items retrieved by the same request often share a dominant feature. We decompose them as:
\begin{equation}
    {x}_k = \textbf{c} + \textbf{d}_k, \quad k \in \{i, j\}
\end{equation}
where $\textbf{c}$ represents the shared feature and $\textbf{d}_k$ the item-specific discriminative feature. Assume that $\|\textbf{c}\| \gg \|\textbf{d}_k\|$. Minimizing $\ell_{\text{list}}$ to distinguish $i$ from $j$ is equivalent to maximizing the score margin $\Delta s = s_i - s_j$. 
\begin{theorem}[Correlation Coefficients, proof in Appendix~\ref{pf:corr}]
\label{thm:corr}
Given highly correlated inputs ${x}_i, {x}_j$, the output correlation of the pointwise model ($\rho_{\text{point}}$) is lower-bounded by the input smoothness, whereas the setwise model ($\rho_{\text{set}}$) can asymptotically approach zero through orthogonal projection. Specifically:
\begin{equation}
    \rho_{\text{set}} \ll \rho_{\text{point}}
\end{equation}
\end{theorem}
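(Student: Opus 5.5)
The plan is to make the informal statement precise and then compare two explicit models. Both act on the decomposition $x_k=\mathbf{c}+\mathbf{d}_k$ with $\|\mathbf{c}\|\gg\|\mathbf{d}_k\|$. The correlation of outputs is the cosine $\rho=\langle z_i,z_j\rangle/(\|z_i\|\|z_j\|)$, which matches the quantity controlled in \eqref{eq:rho_assump} after normalization by $B^2$.

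For the point-wise model, take $z_k=\phi(x_k)$ with $\phi$ an $L_\phi$-Lipschitz map. Smoothness gives $\|z_i-z_j\|\le L_\phi\|\mathbf{d}_i-\mathbf{d}_j\|$. From the identity
\begin{equation}
\rho_{\text{point}} = 1-\frac{\|z_i-z_j\|^2-(\|z_i\|-\|z_j\|)^2}{2\|z_i\|\|z_j\|}
\end{equation}
we get $\rho_{\text{point}}\ge 1-\frac{L_\phi^2\|\mathbf{d}_i-\mathbf{d}_j\|^2}{2\|z_i\|\|z_j\|}$. To keep the denominator of order $\|\phi(\mathbf{c})\|^2$, I add a mild nondegeneracy assumption $\|\phi(\mathbf{c})\|\ge \gamma\|\mathbf{c}\|$, together with $\|z_k-\phi(\mathbf{c})\|\le L_\phi\|\mathbf{d}_k\|$. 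This yields
\begin{equation}
\rho_{\text{point}}\ge 1-O\big(\|\mathbf{d}\|^2/\|\mathbf{c}\|^2\big),
\end{equation}
which tends to $1$ as $\|\mathbf{c}\|/\|\mathbf{d}_k\|\to\infty$. This is the claimed lower bound "by input smoothness." The point is that a point-wise $\phi$ never sees $X$, so it cannot tell which direction is shared and must carry $\phi(\mathbf{c})$ in every output.

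For the set-wise model, I exhibit one admissible member of the class. Since $f_{\mathrm{set}}$ has access to $X_u$, it can estimate the shared component, for instance through the set mean $\bar x=\frac1m\sum_k x_k=\mathbf{c}+\bar{\mathbf{d}}$, or through the top singular direction of the candidate matrix, which SVD-Attention computes anyway. It can then apply the orthogonal projector $P=I-\hat c\hat c^\top$ with $\hat c=\bar x/\|\bar x\|$, and set $z_k=P x_k$. The shared component is then removed, and what remains is $z_k=P\mathbf{d}_k+(\mathbf{c}-\hat c\hat c^\top\mathbf{c})$, where the second term has size $O(\|\bar{\mathbf{d}}\|)$. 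Expanding $\langle z_i,z_j\rangle$ shows it is dominated by $\langle P\mathbf{d}_i,P\mathbf{d}_j\rangle$ plus perturbation terms of relative order $\|\bar{\mathbf{d}}\|/\|\mathbf{d}\|$. If the item-specific parts are zero-mean and weakly correlated, for example isotropic and independent, their inner product concentrates near $0$. Taking $m$ large, or subtracting the leave-one-out mean, makes the residual vanish. Hence $\rho_{\text{set}}\to0$ asymptotically.

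The main obstacle is this last step. Without a distributional assumption on the $\mathbf{d}_k$ (zero mean, near-orthogonality), $\rho_{\text{set}}$ is only as small as the intrinsic correlation of the $\mathbf{d}_k$. Even with exact removal of $\mathbf{c}$, it is not literally zero. I would state that assumption explicitly and phrase the conclusion as
\begin{equation}
\rho_{\text{set}}\le \rho_{\mathbf{d}}+o(1)\ll 1-O\big(\|\mathbf{d}\|^2/\|\mathbf{c}\|^2\big)\le\rho_{\text{point}}.
\end{equation}
Finally, I plug these values into $\mathcal{C}(m,\rho)$ from Corollary~\ref{thm:listwise_gap}. This shows the set-wise model attains the near-i.i.d.\ rate of Theorem~\ref{thm:point}, while the point-wise model suffers the $\sqrt{m}$ penalty.
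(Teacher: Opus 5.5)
Your argument is correct under the assumptions you make explicit. It shares the paper's skeleton: a Lipschitz bound on the point-wise side, and removal of the shared component plus near-orthogonality of the $\mathbf{d}_k$ on the set-wise side. Both halves are carried out differently, though.

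\textbf{Point-wise side.} The paper simply assumes $\|z_i\|=\|z_j\|=1$, so that $\|z_i-z_j\|^2=2(1-\rho_{\text{point}})$. It gets the absolute bound $\rho_{\text{point}}\ge 1-K^2\|\mathbf{d}_i-\mathbf{d}_j\|^2/2$, in which the hypothesis $\|\mathbf{c}\|\gg\|\mathbf{d}_k\|$ enters only implicitly through $K$. Your exact identity together with $\|\phi(\mathbf{c})\|\ge\gamma\|\mathbf{c}\|$ gives the scale-invariant bound $1-O(\|\mathbf{d}\|^2/\|\mathbf{c}\|^2)$, and it makes visible what the unit-norm assumption hides.

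That nondegeneracy condition is not really mild. The point-wise scorer $f_{\mathrm{point}}$ sees $u$. If $\mathbf{c}$ is a function of the request (as the notation $X_u$ suggests), then $x\mapsto x-\mathbf{c}(u)$ is a $1$-Lipschitz point-wise representation in the paper's class, and its output correlation is just that of the $\mathbf{d}_k$. So this assumption is exactly what separates the two classes, and you should state it as such.

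\textbf{Set-wise side.} The paper uses the attention form $z_k=W_V(x_k-\sum_i\alpha_{k,i}x_i)$. It asserts that training drives the aggregated context to $\mathbf{c}$, which is an unproved claim about gradient dynamics. It writes $\mathcal{P}_{\mathbf{c}^\perp}(x_i)=\mathbf{d}_i$, which holds only if $\mathbf{d}_i\perp\mathbf{c}$. It then declares i.i.d.\ $\mathbf{d}_i,\mathbf{d}_j$ orthogonal. You instead prove the existence statement that ``can asymptotically approach zero'' actually makes. You give an explicit member of the set-wise class, an explicit residual $\|P\mathbf{c}\|=O(\|\bar{\mathbf{d}}\|)$, and the honest conclusion $\rho_{\text{set}}\le\rho_{\mathbf{d}}+o(1)$.

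What the paper's route buys is a direct tie to the SOLAR mechanism, and you can recover it cheaply. Uniform attention weights give $x_k-\bar x=\mathbf{d}_k-\bar{\mathbf{d}}$, which removes $\mathbf{c}$ exactly and leaves a residual of the same order $\|\bar{\mathbf{d}}\|$.

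\textbf{Two small corrections.} First, the leave-one-out mean does not by itself make the residual vanish at fixed $m$. Under your i.i.d.\ zero-mean assumption, $\|\bar{\mathbf{d}}_{-k}\|\sim\|\mathbf{d}\|/\sqrt{m-1}$, so you still need $m\to\infty$. Second, in the paper SVD-Attention factorizes the history matrix $H$, not the candidate matrix, so it does not hand you $\hat c$ for free.
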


% \begin{table*}[!ht]
% \centering
% \caption{Performance Comparison on Different Datasets}
% \label{tab:main_results}
% \resizebox{\textwidth}{!}{
% \begin{tabular}{llccccccccccccc} % 原来17列c(含MALA)，去掉2列 -> 15列c
% \toprule
% Dataset & Metric
% & DIN & SIM & LREA & Longer & TWIN & TWINv2
% & \multicolumn{6}{c}{IFA} \\
% \cmidrule(lr){9-14} % 原来11-16，整体左移2 -> 9-14
% & & & & & & &
% & Softmax Attn
% & Linear Attn
% & MALA
% & Hiformer
% & Longformer
% & SVD-Attention \\
% \midrule

% \multirow{2}{*}{RecFlow}
% & Logloss
% & 0.0689 & 0.0673 & 0.1475 & 0.0638 & 0.0687 & 0.06916
% & 0.0617 & 0.0615 & 0.0622 & - & 0.0618 & 0.0611 \\
% & AUC
% & 0.6048 & 0.6113 & 0.5695 & 0.6676 & 0.6057 & 0.6178
% & 0.6829 & 0.6769 & 0.6739 & - & 0.6708 & 0.6797 \\
% \midrule

% \multirow{2}{*}{MIND}
% & Logloss
% & - & - & - & - & - & -
% & - & - & - & - & - & - \\
% & AUC
% & - & - & - & - & - & -
% & - & - & - & - & - & - \\
% \midrule

% \multirow{2}{*}{Industrial}
% & Logloss
% & - & - & - & - & - & -
% & - & - & - & - & - & - \\
% & AUC
% & - & - & - & - & - & -
% & - & - & - & - & - & - \\

% \bottomrule
% \end{tabular}
% }
% \end{table*}

To conclude, our theoretical analysis establishes the fundamental superiority of set-wise architectures from two complementary perspectives. While point-wise models suffer from irreducible ranking errors under contextual flips and loose generalization bounds due to correlations, set-wise models provably eliminate these biases and achieve tighter generalization limits by dynamically de-correlating representations. These results provide new understanding perspectives of context-aware modeling in ranking systems.

%% file: body/results.tex
%%%%%%%%%%%%%%%%%%%%%%%%%%%%%%%%%%%%%%%%%%%%%%%%%%%%%%%%%%%%%%%%%%%%%%%%%%%%%%%%

\section{Evaluation}
\label{sec:eval}

% This section evaluates SOLAR in the regime it targets: set-conditioned scoring with long user histories and candidate sets, where attention becomes the dominant cost. We evaluate our framework compared with other common user-behavior modeling framework in recommendation systems through extensive offline and large-scale industrial online experiments.

In this Section, we evaluate SOLAR compared with other common user-behavior modeling framework in industrial recommendation systems through extensive offline and large-scale industrial online experiments.

\begin{table}[htbp]
\centering
\caption{Evaluation settings. Offline benchmarks use length-50 histories and 50-item candidate sets per request. Industrial reports online AUC with 12,000 historical behaviors and 3000 candidates.}
\label{tab:eval_settings}
\resizebox{0.45\textwidth}{!}{% 将表格缩小到页面宽度的80%
\begin{tabular}{lcccc}
\toprule
Dataset & Evaluation & History length & Candidate size \\
\midrule
RecFlow & Offline & 50 & 120 \\
MIND & Offline & 50 & 50-130 \\
Kuaishou & Online & 12,000 & 3000 \\
\bottomrule
\end{tabular}
}
\end{table}

\textbf{Benchmark.} RecFlow \cite{liu2025recflow} and MIND \cite{wu-etal-2020-mind} serve as offline benchmarks. Each instance contains a length-50 history sequence and around 50-130 candidate set, constructed by the procedure described in Appendix \ref{datasets}. Experiment on Kuaishou's platform is evaluated online using AUC from real training, where each request contains 12 thousands historical behaviors and 3000 candidates. We report Logloss and AUC on RecFlow and MIND. Online experiment reports AUC and UAUC for 24 hours after the model converges during streaming training. 

\textbf{Baseline Setting.} Two-stage baselines such as SIM \cite{qi2020searchbasedusermodelinglifelong} apply an explicit filtering stage before scoring. On RecFlow and MIND, they select a subset from the length-50 history and model the top 20 retrieved behaviors. On Kuaishou's online experiment, they select from the 12,000 length history and model 300 behaviors per request. Other baseline frameworks stay follow the same setting with SOLAR.

% Table~\ref{main-result} summarizes the primary comparison. The table is structured to separate offline accuracy from online AUC, since the online setting couples modeling fidelity with system feasibility, which reflects the importance of the set aware model.
\subsection{Main results}
\label{sec:main_results}

Table~\ref{main-result} reports the main comparison on RecFlow, MIND, and Kuaishou's online traffic. On the two offline benchmarks, SOLAR attains the best AUC and Logloss across all methods. On RecFlow, SOLAR reaches an AUC of $0.6812$ and a Logloss of $0.0610$. On MIND, SOLAR achieves $0.6713$ AUC and $0.1052$ Logloss. On Kuaishou, SOLAR delivers the best online ranking quality. It achieves AUC $0.8531$ and UAUC $0.8502$ on real requests. While another set-aware framework IFA \cite{yu2024ifainteractionfidelityattention} and SOLAR exhibit very close AUC on all three datasets, Section~\ref{sec:ablation} shows that SOLAR attains this accuracy with a more efficient attention operator and reduced machine consumption in the online deployment.

\subsection{Forward-pass efficiency of attention}
\label{sec:forward_efficiency}

We report the forward cost of the attention kernel and compare Softmax Attention, Linear Attention, and SVD-Attention. To avoid conflating algorithmic scaling with parallel scheduling effects, we run the benchmark on CPU and constrain execution to a single thread. Figure~\ref{fig:forward_time} reports forward latency. The plot is intended to expose scaling behavior in the same input regime used by our benchmarks.

\begin{figure}[htbp]
\centering
\includegraphics[width=0.92\linewidth]{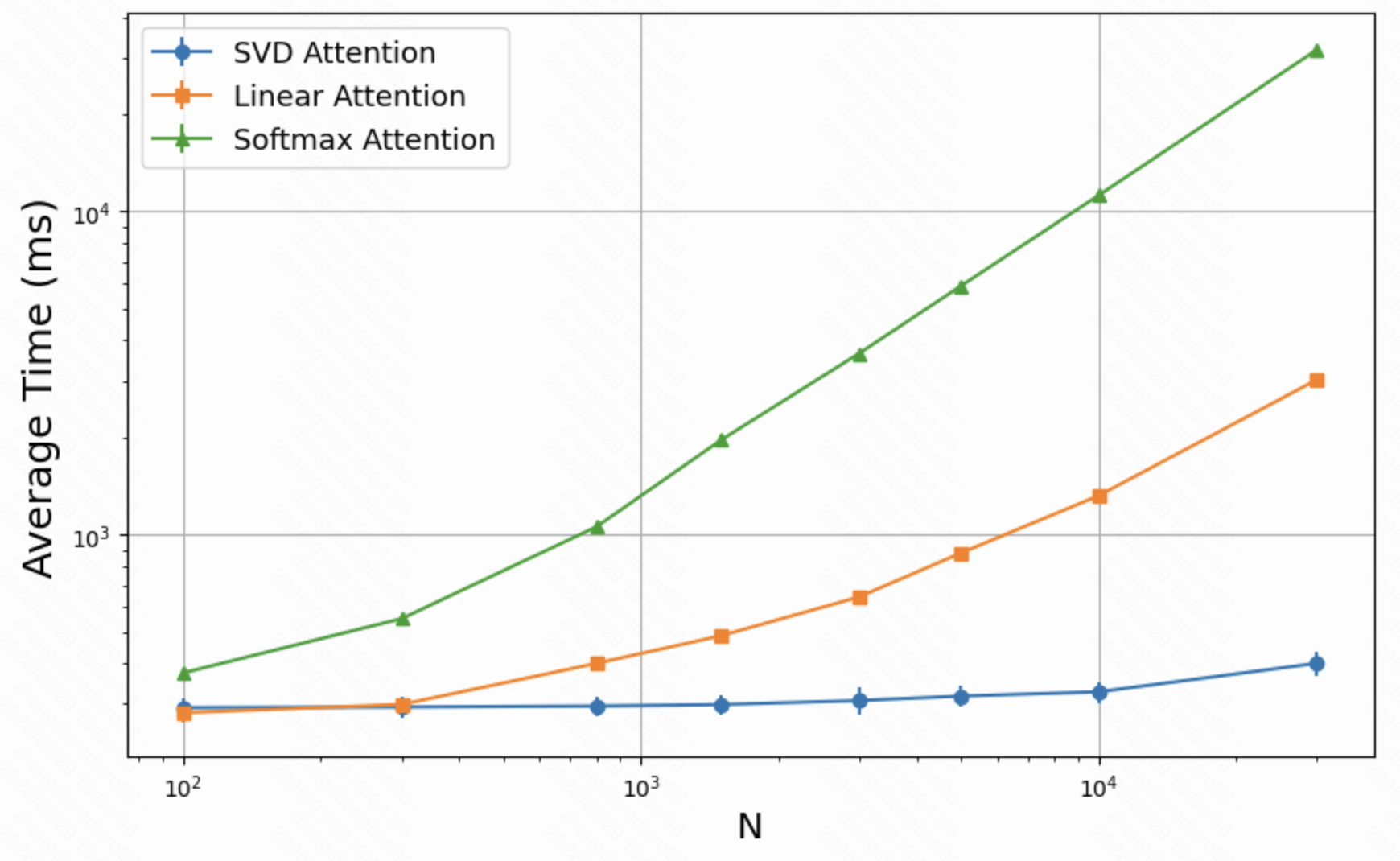} % 将图片宽度设置为页面宽度的一半
\caption{Forward latency of the attention module on CPU under single-thread execution. The x-axis varies the history length $N$ and the y-axis is the time in milliseconds. Candidate size $m$ and embedding dimension $d$ are held fixed.}
\label{fig:forward_time}
\end{figure}

\subsection{Ablation studies}
\label{sec:ablation}

% \subsubsection{Attention operator under a fixed framework}
\label{sec:ablation_attention}

In this ablation study, we keep the overall framework fixed and only swap the attention operator. This design aims to verify whether any observed changes are attributable to the attention mechanism rather than architectural drift. Additionally, since industrial feasibility is influenced by resource footprint, we also report the number of machines served in the online deployment. The machines deployed online uses CPU. We report number of cores tested under 6.14\% real traffic as machine consumption. The machine consumption for online serving not only includes the forward propagation time but also the actual overhead of fetching embeddings and other related operations. Moreover, the SOLAR model is designed around two modeling components: candidate-set modeling and history-sequence modeling. We conducted ablation experiments on these two components to directly assess whether the added structure is truly necessary in the target environments. The AUC results are reported for the RecFlow and Industrial datasets.

% Removing either component forces the scorer into a weaker hypothesis class, directly testing whether the added structure is necessary in the environments we target. We report the AUC on RecFlow and Industrial datasets, as these two settings emphasize different failure modes: offline generalization and online feasibility.

% \usepackage{graphicx} % 引入graphicx包
\begin{table}[htbp]
\centering
\caption{Attention ablation under a same set-aware framework.}
\label{tab:ablation_attention_vertical}
\resizebox{1.0\linewidth}{!}{
\begin{tabular}{lccc}
\toprule
\textbf{Dataset} & \multicolumn{1}{c}{\textbf{RecFlow}} & \multicolumn{2}{c}{\textbf{Kuaishou}} \\
\cmidrule(lr){2-2}\cmidrule(lr){3-4}
\textbf{Matrices} & \multicolumn{1}{c}{\textbf{AUC}} & \multicolumn{1}{c}{\textbf{AUC} ($mean^{\pm{std}}$)} & \multicolumn{1}{c}{\textbf{$\Delta$ Cores Usage}} \\
\midrule
Softmax Attn\cite{vaswani2023attentionneed} & 0.6829  &  $0.8533^{\pm 0.00451}$  & base \\
MALA\cite{fan2025rectifyingmagnitudeneglectlinear} & 0.6759  & $0.8525^{\pm 0.00410}$ & -38.09\% \\
Longformer\cite{beltagy2020longformerlongdocumenttransformer} & 0.6708  & $0.8539^{\pm 0.00493}$ & -23.80\% \\
Hiformer \cite{gui2023hiformerheterogeneousfeatureinteractions} & 0.6718  & $0.8312^{\pm 0.00293}$  & -23.80\% \\
Linear Attn\cite{katharopoulos2020transformersrnnsfastautoregressive} & 0.6769 & $0.8518^{\pm 0.00569}$ & -38.09\% \\
\midrule
SVD-Attn without Softmax & 0.6757  & $0.8520^{\pm 0.00656}$ & - \\
Only Candidate-Set Modeling &  0.6293  & $0.8012^{\pm 0.00408}$ & - \\
Only User-History Modeling &  0.6644 & $0.8508^{\pm 0.00496}$ & - \\
\midrule
SVD-Attention & 0.6812  & $0.8531^{\pm 0.00589}$ & -52.38\% \\
\bottomrule
\end{tabular}
}
\end{table}

%% file: body/conclusion.tex
\section{Conclusion}
\label{sec:conclusion}
Attention mechanism is key in Transformers for global credit assignment, but its $\mathcal{O}(N^2d)$ cost makes long-context modeling expensive. Substitute methods like Linear attention achieves sub-quadratic complexity by reordering the multiplication in traditional attention through kernel feature maps, but the reordering removes softmax and alters the attention score distribution. This paper takes a different route. We derive SVD-attention from the low-rank structure of the shared key--value matrix, and the derivation preserves the conventional QKV multiplication order rather than replacing the operator. SVD-attention is theoretically reduces the dominant attention cost from $\mathcal{O}(N^2 d)$ to $\mathcal{O}(N d r)$ with rank-$r$ while retaining softmax. SVD-attention makes set-conditioned ranking with lifelong histories operational in an industrial recommender. Building on it, SOLAR models user behavior sequences at the ten-thousand scale and scores several thousand candidates in a cascading process without filtering. Theoretical analysis further distinguishes SOLAR from point-wise scoring. While this work evaluates it only for sequential recommendation and does not analyze its behavior in language modeling, vision, or long-context retrieval, we are confident that the underlying design will transfer because low-rank structure is a recurring inductive bias in large-scale representation learning and attention remains the dominant mechanism for global credit assignment, so the same SVD-based acceleration can remove quadratic bottlenecks and provide the community with a practical route to scale modeling beyond recommendation systems.

\section*{Impact Statement}
This paper aims to advance machine learning by improving the computational scalability of attention through \textsc{SVD-attention} and by enabling efficient set-wise sequential modeling in recommendation. The contribution is primarily technical, reducing computation and latency under a low-rank regime while preserving the standard attention operator. Any broader societal effects are mediated by downstream applications and deployment choices, none of which we feel must be specifically highlighted here.

%% file: appdix/Appdix-SVD-Details.tex
\section{Reproducibility}
% Code at Anonymous Repository \href{https://anonymous.4open.science/r/SOLAR-SVD-Attention-00A3}{https://anonymous.4open.science/r/SOLAR-SVD-Attention-00A3}.
The code will be published upon acceptance.

\section{Detailed Derivation of SVD Gradients}
\label{app:svd_derivation}

In this section, we provide a derivation of the SVD gradient with respect to the input matrix $H$, under the constraints of our truncated SVD-Attention mechanism.

\subsection{Differential of the SVD}
Let $H_r = U \Sigma V^\top$ be the rank-$r$ approximation of $H \in \mathbb{R}^{L \times d}$, then:
\begin{equation}
\label{eq:diff_H_raw}
    \mathrm{d}H_{r} = \mathrm{d}U \Sigma V^\top + U \mathrm{d}\Sigma V^\top + U \Sigma \mathrm{d}V^\top.
\end{equation}
Since $U$ and $V$ are orthogonal matrices, differentiating the identity $U^\top U = I$ yields $(\mathrm{d}U)^\top U + U^\top \mathrm{d}U = 0$. 
Defining $\Omega_U = U^\top \mathrm{d}U$ and $\Omega_V = V^\top \mathrm{d}V$, it follows that $\Omega_U$ and $\Omega_V$ are skew-symmetric. Thus express the differentials as:
\begin{equation}
\label{eq:dU_dV_sub}
    \mathrm{d}U = U \Omega_U, \quad \mathrm{d}V = V \Omega_V.
\end{equation}
Substituting Eq.~\eqref{eq:dU_dV_sub} into Eq.~\eqref{eq:diff_H_raw}, we obtain:
\begin{equation}
    \mathrm{d}H_{r} = U \Omega_U \Sigma V^\top + U \mathrm{d}\Sigma V^\top - U \Sigma \Omega_V V^\top.
\end{equation}
To isolate components ($\mathrm{d}\Sigma, \Omega_U, \Omega_V$), we define the projected differential matrix $\mathrm{d}P$ as:
\begin{equation}
\label{eq:dP_def_full}
\begin{aligned}
    \mathrm{d}P \coloneqq U^\top \mathrm{d}H_{r} V = \Omega_U \Sigma + \mathrm{d}\Sigma - \Sigma \Omega_V.
\end{aligned}
\end{equation}

\subsection{Solving the Linear System}
As for the $(i,j)$-th entry of Eq.(\ref{eq:dP_def_full}):
\begin{equation}
\label{eq:dP_elementwise}
    \mathrm{d}P_{ij} = (\Omega_U)_{ij} \sigma_j + \mathrm{d}\Sigma_{ij} - \sigma_i (\Omega_V)_{ij}.
\end{equation}
Noting that diagonal entries of skew-symmetric matrices are zero, and $\mathrm{d}\Sigma$ is diagonal, we have:
\begin{equation}\label{eq:dP_cases}
    \mathrm{d}P_{ij} = 
    \begin{cases}
        \mathrm{d}\Sigma_{ii}, & i=j \\
        (\Omega_U)_{ij} \sigma_j - \sigma_i (\Omega_V)_{ij}, & i\neq j
    \end{cases}.
\end{equation}
For the off-diagonal case ($i \neq j$), we consider the transposed projection $\mathrm{d}P^\top$. With property $\Omega^\top = -\Omega$, we derive:
\begin{equation}
\label{eq:sys_2}
    \mathrm{d}P_{ji} = (\mathrm{d}P^\top)_{ij} = -\sigma_i (\Omega_U)_{ij} + \sigma_j (\Omega_V)_{ij}.
\end{equation}
Combining Eq.~\eqref{eq:dP_cases} and Eq.~\eqref{eq:sys_2}, we obtain two linear equations. Since our attention mechanism does not explicitly instantiate $U$ in the forward pass, eliminating $(\Omega_U)_{ij}$ yields the solution for $\Omega_V$:
\begin{equation}
\label{eq:OmegaV_sol}
    (\Omega_V)_{ij} = \frac{\sigma_i \mathrm{d}P_{ij} + \sigma_j \mathrm{d}P_{ji}}{\sigma_j^2 - \sigma_i^2} = \frac{\sigma_i \mathrm{d}P_{ij} + \sigma_j \mathrm{d}P_{ji}}{-(\sigma_i^2 - \sigma_j^2)}.
\end{equation}

\subsection{Derivation of the Gradient Matrix}
We now relate these differentials to the loss function $\mathcal{L}$.
Let the upstream gradients be $\bar{U} = \partial\mathcal{L} / \partial U$, $\bar{\Sigma} = \partial\mathcal{L} / \partial \Sigma$, and $\bar{V} = \partial\mathcal{L} / \partial V$.
The total differential of the loss is:
\begin{equation}
    \mathrm{d}\mathcal{L} = \operatorname{Tr}(\bar{U}^\top \mathrm{d}U) + \operatorname{Tr}(\bar{\Sigma}^\top \mathrm{d}\Sigma) + \operatorname{Tr}(\bar{V}^\top \mathrm{d}V).
\end{equation}
 We have $\bar{U} \equiv 0$ due to its absence in forward pass. Substituting $\mathrm{d}V = V \Omega_V$, the equation simplifies to:
\begin{equation}
    \mathrm{d}\mathcal{L} = \operatorname{Tr}(\bar{\Sigma} \mathrm{d}\Sigma) + \operatorname{Tr}(P^\top \Omega_V),
\end{equation}
where $P = V^\top \bar{V}$. Expanding the trace sum:
\begin{equation}
    \mathrm{d}\mathcal{L} = \sum_{i} \bar{\Sigma}_{ii} \mathrm{d}\Sigma_{ii} + \sum_{i \neq j} P_{ji} (\Omega_V)_{ij}.
\end{equation}
Substituting the solution for $(\Omega_V)_{ij}$ from Eq.~\eqref{eq:OmegaV_sol}:
\begin{equation}
    \mathrm{d}\mathcal{L} = \sum_{i} \bar{\Sigma}_{ii} \mathrm{d}P_{ii} - \sum_{i \neq j} P_{ji} \left( \frac{\sigma_i \mathrm{d}P_{ij} + \sigma_j \mathrm{d}P_{ji}}{\sigma_i^2 - \sigma_j^2} \right).
\end{equation}
We rearrange the second term to identify the coefficient of $\mathrm{d}P_{ij}$. By swapping indices $i$ and $j$ in the second part of the summation, we derive:
\begin{equation}
\begin{aligned}
    \text{Sum}_{i \neq j} &= \sum_{i \neq j} \left( - \frac{P_{ji} \sigma_i}{\sigma_i^2 - \sigma_j^2} \mathrm{d}P_{ij} - \frac{P_{ji} \sigma_j}{\sigma_i^2 - \sigma_j^2} \mathrm{d}P_{ji} \right) \\
    &= \sum_{i \neq j} \left( - \frac{P_{ji} \sigma_i}{\sigma_i^2 - \sigma_j^2} + \frac{P_{ij} \sigma_i}{\sigma_i^2 - \sigma_j^2} \right) \mathrm{d}P_{ij} \\
    &= \sum_{i \neq j} \frac{\sigma_i (P_{ij} - P_{ji})}{\sigma_i^2 - \sigma_j^2} \mathrm{d}P_{ij}.
\end{aligned}
\end{equation}
Thus, we can define the gradient matrix $G = \frac{\partial\mathcal{L}}{\partial P}$ such that $\mathrm{d}\mathcal{L} = \operatorname{Tr}(G^\top \mathrm{d}P)$. The elements of $G$ are:
\begin{equation}
    G_{ij} = 
    \begin{cases}
        \bar{\Sigma}_{ii}, & i = j \\
        \sigma_i F_{ij} (P_{ij} - P_{ji}), & i \neq j
    \end{cases}
\end{equation}
where $F_{ij} = 1/(\sigma_i^2 - \sigma_j^2)$. 

To align with our implementation which utilizes symmetric matrix operations, we observe that the derived term matches the structure of $2 \sigma_i \operatorname{sym}(F \circ P)_{ij}$. Using the skew-symmetry $F_{ji} = -F_{ij}$:
\begin{equation}
\begin{aligned}
    2 \operatorname{sym}(F \circ P)_{ij} = (F \circ P)_{ij} + (F \circ P)_{ji} = F_{ij} P_{ij} + F_{ji} P_{ji}= F_{ij} (P_{ij} - P_{ji}).
\end{aligned}
\end{equation}
Multiplying by $\sigma_i$, we recover the derived gradient form. Since we have identified the gradient w.r.t $P$: $G = \frac{\partial\mathcal{L}}{\partial P}$, where $\mathrm{d}P = U^\top \mathrm{d}H_{r} V$, we then apply the chain rule:
\begin{equation}
\begin{aligned}
    \mathrm{d}\mathcal{L} = \operatorname{Tr}(G^\top \mathrm{d}P) = \operatorname{Tr}(G^\top U^\top \mathrm{d}H_{r} V)= \operatorname{Tr}((U G V^\top)^\top \mathrm{d}H_{r}).
\end{aligned}
\end{equation}
Thus, the gradient w.r.t $H_r$ is:
\begin{equation}
\begin{aligned}
    \frac{\partial\mathcal{L}}{\partial H_r} = U G V^\top = U \left[ \bar{\Sigma}_{\operatorname{diag}} + 2 \Sigma \operatorname{sym}(F \circ (V^\top \bar{V})) \right] V^\top.
\end{aligned}
\end{equation}

\subsection{Bias Derivation}
\label{app:bias_derivation}
We further analyze the bias of the gradients derived from rank-$r$ subspace against full-SVD results\citep{ionescu2015matrix}. Let the full-SVD of $H \in \mathbb{R}^{L \times d}$ be $H = \mathbf{U} \mathbf{\Sigma} \mathbf{V}^\top$, where:
\begin{equation}
    \mathbf{U} = [U \;|\; U_\perp],  
    \mathbf{\Sigma} = \begin{bmatrix} \Sigma & 0 \\ 0 & \Sigma_\perp \end{bmatrix}, 
    \mathbf{V} = [V \;|\; V_\perp].
\end{equation}
Here, the complements $U_\perp\in\mathbb{R}^{L\times (L-r)}$, $\Sigma_\perp \in \mathbb{R}^{(L-r) \times(d-r)}$, and $V_\perp\in\mathbb{R}^{d\times(d-r)}$. Considering that $\bar{\mathbf{U}}\equiv 0$, the partial differential of the loss $\mathcal{L}$ w.r.t $H$ is:
\begin{equation}
\begin{aligned}
    \frac{\partial\mathcal{L}}{\partial H} = \mathbf{UGV^\top}=\mathbf{U} \left[ \bar{\mathbf\Sigma}_{\operatorname{diag}} + 2 \mathbf\Sigma \operatorname{sym}(\mathbf{F} \circ (\mathbf V^\top \bar{\mathbf V})) \right] \mathbf V^\top.
\end{aligned}
\end{equation}
We decompose the gradient matrix $\mathbf{G}$ into blocks corresponding to top-$r$ and the other interactions:
\begin{equation}
    \mathbf{G} = \begin{bmatrix} G_{r,r} & G_{r,\perp} \\ G_{\perp,r} & G_{\perp,\perp} \end{bmatrix}.
\end{equation}
Expanding the full gradient using these blocks:
\begin{equation}
\begin{aligned}
    \nabla_H \mathcal{L} &= [U \mid U_\perp] \begin{bmatrix} G_{r,r} & G_{r,\perp} \\ G_{\perp,r} & G_{\perp,\perp} \end{bmatrix} \begin{bmatrix} V^\top \\ V_\perp^\top \end{bmatrix} = U G_{r,r} V^\top + \operatorname{Res}.
\end{aligned}
\end{equation}
where
\begin{equation}
        \operatorname{Res} = U_\perp G_{\perp,r}V^\top + UG_{r,\perp}V_\perp^\top+U_\perp G_{\perp,\perp}V_\perp^\top.
\end{equation}
The first term corresponds exactly to $\partial\mathcal{L}/\partial H_r$. Approximation bias arises from other blocks, which is carefully derived in following discussions. Consider the blocks of $\mathbf{V}^\top\bar{\mathbf{V}}$:
\begin{equation}
    \mathbf{V}^\top\bar{\mathbf{V}} = \begin{bmatrix}
        V^\top\\ V_\perp^\top
    \end{bmatrix} \begin{bmatrix}
        \bar{V}\ |\ \bar{V}_\perp
    \end{bmatrix} = \begin{bmatrix}
        V^\top\bar{V} & V^\top\bar{V}_\perp\\ V_\perp^\top \bar{V} & V_\perp^\top\bar{V}_\perp
    \end{bmatrix}
\end{equation}
then the three blocks in $\mathbf{G}$ can be expressed as:
\begin{equation}
\begin{aligned}
    G_{r,\perp} &= \Sigma \cdot \mathbf{F}_{r,\perp} \circ (V^\top\bar{V}_\perp + \bar{V}^\top V_\perp) \\
    G_{\perp, r} &= \Sigma_\perp \cdot \mathbf{F}_{\perp, r} \circ (V^\top\bar{V}_\perp + \bar{V}^\top V_\perp) \\
    G_{\perp,\perp} &= \bar{\Sigma}_\perp + 2\Sigma_\perp \cdot \mathbf{F}_{\perp,\perp} \circ \operatorname{sym}(V_\perp^\top\bar{V}_\perp)
\end{aligned}
\end{equation}
To identify the dominant bias term, we simplify these blocks with the following observations:
\begin{enumerate}
    \item Since the loss function only depends on the Top-$r$ components, the upstream gradient $\bar{V}_\perp = 0$ and $\bar{\Sigma}_\perp = 0$, which leads to $G_{\perp,\perp}=0$ and $V^\top\bar{V}_\perp = 0$.
    \item Assume the Top-$r$ singular values are significant while the others are negligible, i.e., $\sigma(\Sigma)\gg\sigma(\Sigma_\perp)$. Therefore, $G_{\perp, r}$ is negligible since scaled by $\Sigma_\perp$.
\end{enumerate}

Focusing on $G_{r,\perp}$, we further approximate $\mathbf{F}$. For indices $i \leq r$ and $j > r$, the approximation $\sigma_i \gg \sigma_j \simeq 0$ yields:
\begin{equation}
    (\mathbf{F}_{r,\perp})_{ij} = \frac{1}{\sigma_i^2 - \sigma_j^2} \simeq \frac{1}{\sigma_i^2} \implies \mathbf{F}_{r,\perp} \simeq \Sigma^{-2}.
\end{equation}
Substituting back into $G_{r,\perp}$:
\begin{equation}
\begin{aligned}
    G_{r,\perp} \simeq \Sigma \cdot (\Sigma^{-2}) \circ (\bar{V}^\top V_\perp) = \Sigma^{-1} \bar{V}^\top V_\perp.
\end{aligned}
\end{equation}

Based on the derived interaction block, the explicit form of the bias $\mathcal{E}$ is given by the projection of the noise coupling term back to the input space:
\begin{equation}
    \mathcal{E} \simeq U G_{r,\perp} V_\perp^\top = U \Sigma^{-1} \bar{V}^\top (I_d - V V^\top).
\end{equation}
To quantify the potential risk, we estimate the upper bound of its magnitude. Let $\bar{V}_{\text{orth}} = \bar{V}^\top (I_d - V V^\top)$ denote the component of the upstream gradient lying in the noise subspace. The magnitude of the bias is bounded by:
\begin{equation}
\begin{aligned}
    \|\mathcal{E}\|_F = \| U \Sigma^{-1} \bar{V}_{\text{orth}} \|_F \le \|U\|_2 \cdot \|\Sigma^{-1}\|_2 \cdot \|\bar{V}_{\text{orth}}\|_F = \frac{1}{\sigma_r} \|\bar{V}_{\text{orth}}\|_F,
\end{aligned}
\end{equation}
where $\sigma_r$ is the smallest singular value in the signal subspace (the $r$-th singular value). This upper bound implies that the full gradient is prone to numerical instability as $\sigma_r \to 0$, where the factor $\Sigma^{-1}$ arbitrarily amplifies orthogonal noise. Our mechanism serve as a spectral regularizer by enforcing $\|\mathcal{E}\| = 0$, effectively filtering out high-variance components to preserve informative signals.

%% file: appdix/Appdix-Theory-Detail.tex
\section{Proof Details}
\subsection{Proof of Theorem~\ref{thm:pairwise_point_bayes}}
\label{pf:pairwise_point_bayes}
\begin{proof}
The population risk $R(f)$ can be decomposed into independent optimization problems for each pair $(i, j)$. Let $\Delta_{ij} = f(x_i) - f(x_j)$ be the logit difference. Minimizing Eq.~\eqref{eq:pair_risk_point} is equivalent to minimizing each pair:
\begin{equation}
\small
\begin{aligned}
\min_{\Delta_{ij}}\ & \mathbb{E}_X [\pi^\star_{ij}(X)\log(1+e^{-\Delta_{ij}}) + (1-\pi^\star_{ij}(X))\log(1+e^{\Delta_{ij}})] \\
& = p_{ij}\log(1+e^{-\Delta_{ij}}) + (1-p_{ij})\log(1+e^{\Delta_{ij}}).
\end{aligned}
\end{equation}
This objective is strictly convex with respect to $\Delta_{ij}$. The optimality condition is:
\begin{equation}
\frac{\partial \text{Obj}}{\partial \Delta_{ij}} = -p_{ij} \frac{e^{-\Delta_{ij}}}{1+e^{-\Delta_{ij}}} + (1-p_{ij}) \frac{e^{\Delta_{ij}}}{1+e^{\Delta_{ij}}} = 0.
\end{equation}
With $\sigma(z) = \frac{1}{1+e^{-z}}$, this simplifies to:
\begin{equation}
\begin{aligned}
-p_{ij}(1-\sigma(\Delta_{ij})) + (1-p_{ij})\sigma(\Delta_{ij}) = 0
\implies \sigma(\Delta_{ij}) = p_{ij}.
\end{aligned}
\end{equation}
Solving for $\Delta_{ij}$ yields $\log (p_{ij}/(1-p_{ij}))$.
\end{proof}

\subsection{Proof of Corollary~\ref{cor:context_flip_pairwise}}
\label{pf:context_flip_pairwise}
\begin{proof}
Consider any point-wise scorer $f \in \mathcal{F}_{\mathrm{point}}$. By definition, the score $f(x)$ depends solely on item $x$, implying that the predicted score difference $\Delta_{ij} = f(x_i) - f(x_j)$ is a constant scalar independent of the context $X$. Consequently, the predicted ordering $\mathrm{sgn}(\Delta_{ij})$ is static.

However, under Assumption~\ref{assump:contextual_flip}, the ground-truth ordering $\mathrm{sgn}(\eta^\star(x_i, X) - \eta^\star(x_j, X))$ takes opposite values in contexts $X^{(1)}$ and $X^{(2)}$. Therefore, regardless of the specific value to which $f$ converges, $f$ must misclassify the preference relation on at least one of the contexts.

Since the misclassification event occurs with probability at least $\min(\mathbb{P}(X^{(1)}), \mathbb{P}(X^{(2)})) > 0$, the indicator function in the definition of Bipartite Ranking Risk (Definition~\ref{def:ranking_error}) contributes a non-zero value to the expectation. Thus, $\mathcal{R}(f) > 0$.
\end{proof}

\subsection{Proof of Theorem~\ref{thm:point}}
\label{pf:point}
\begin{proof}
We first bound the empirical Rademacher complexity $\hat{\mathfrak{R}}_S(\ell\circ\mathcal{F}_{\mathrm{point}})$. 
Given that the loss function $\ell$ is $1$-Lipschitz, by Talagrand's lemma, we have:
\begin{equation}
\hat{\mathfrak{R}}_S(\ell\circ\mathcal{F}_{\mathrm{point}})
\;\le\;
\hat{\mathfrak{R}}_S(\mathcal{F}_{\mathrm{point}}).
\end{equation}
Let $\{\sigma_{u,i}\}$ be i.i.d. Rademacher variables. Following the definition of $\mathcal{F}_{\mathrm{point}}$, let $z_{u,i} = \phi(u, x_{u,i})$ denote the latent representation. The empirical Rademacher complexity of $\mathcal{F}_{\mathrm{point}}$ is:
\begin{align}
\hat{\mathfrak{R}}_S(\mathcal{F}_{\mathrm{point}})
&=
\mathbb{E}_{\sigma}\Big[
\sup_{\|w\|_2\le W}
\frac{1}{mN}\sum_{u=1}^N\sum_{i=1}^m \sigma_{u,i}\, w^\top z_{u,i}
\Big] \nonumber\\
&=
\frac{1}{mN}\mathbb{E}_{\sigma}\Big[
\sup_{\|w\|_2\le W}
w^\top \Big(\sum_{u,i}\sigma_{u,i} z_{u,i}\Big)
\Big] \nonumber\\
&=
\frac{W}{mN}\mathbb{E}_{\sigma}\Big[
\Big\|\sum_{u,i}\sigma_{u,i} z_{u,i}\Big\|_2
\Big],
\end{align}
where the last equality follows from 
\begin{equation}
\sup_{\|w\|_2\le W} w^\top v = W\|v\|_2.
\end{equation}
Applying Jensen's inequality, we obtain:
\begin{align}
\hat{\mathfrak{R}}_S(\mathcal{F}_{\mathrm{point}})
&\le
\frac{W}{mN}
\sqrt{
\mathbb{E}_{\sigma}\Big[
\Big\|\sum_{u,i}\sigma_{u,i} z_{u,i}\Big\|_2^2
\Big]
} \nonumber=
\frac{W}{mN}
\sqrt{
\mathbb{E}_{\sigma}\Big[
\sum_{u,i}\sum_{v,j}\sigma_{u,i}\sigma_{v,j}\, z_{u,i}^\top z_{v,j}
\Big]
}.
\end{align}
Since $\sigma_{u,i}$ are independent with zero mean, the expectation $\mathbb{E}[\sigma_{u,i}\sigma_{v,j}]$ is $1$ if $(u,i)=(v,j)$ and $0$ otherwise. Thus:
\begin{equation}
\mathbb{E}_{\sigma}\Big[
\Big\|\sum_{u,i}\sigma_{u,i} z_{u,i}\Big\|_2^2
\Big]
=
\sum_{u=1}^N\sum_{i=1}^m \|z_{u,i}\|_2^2
\le mNB^2,
\end{equation}
Substituting back, we have $\hat{\mathfrak{R}}_S(\mathcal{F}_{\mathrm{point}}) \le \frac{W}{mN}\sqrt{mNB^2} = \frac{WB}{\sqrt{mN}}$. Taking expectation over sample $S$ yields result in \eqref{eq:rad_point} and \eqref{eq:gen_point_thm}.
\end{proof}

\subsection{Proof of Theorem~\ref{thm:block}}
\label{pf:block}
\begin{proof}
Similar to Theorem~\ref{thm:point}, we have:
\begin{equation}
\hat{\mathfrak{R}}_S(\ell\circ\mathcal{F}_{\mathrm{set}})
\;\le\;
\hat{\mathfrak{R}}_S(\mathcal{F}_{\mathrm{set}}).
\end{equation}
The key difference from the point-wise case is the dependency structure: requests are independent samples while items within each candidate set $X_u$ are dependent. Hence, we use one Rademacher variable per request, i.e., $\{\sigma_u\}_{u=1}^N$. Define the set-wise empirical Rademacher complexity:
\begin{align}
\hat{\mathfrak{R}}_S(\mathcal{F}_{\mathrm{set}})
&=
\mathbb{E}_{\sigma}\Big[
\sup_{\|w\|_2\le W}
\frac{1}{mN}\sum_{u=1}^N \sigma_u \sum_{i=1}^m w^\top z_{u,i}
\Big] \nonumber\\
&=
\frac{1}{mN}\mathbb{E}_{\sigma}\Big[
\sup_{\|w\|_2\le W}
w^\top \Big(\sum_{u=1}^N \sigma_u \underbrace{\sum_{i=1}^m z_{u,i}}_{=: \bar{Z}_u}\Big)
\Big] \nonumber=
\frac{W}{mN}\mathbb{E}_{\sigma}\Big[
\Big\|\sum_{u=1}^N \sigma_u \bar{Z}_u\Big\|_2
\Big].
\end{align}
Applying Jensen's inequality and expanding the squared norm:
\begin{align}
\hat{\mathfrak{R}}_S(\mathcal{F}_{\mathrm{set}})
\le
\frac{W}{mN}
\sqrt{
\mathbb{E}_{\sigma}\Big[
\Big\|\sum_{u=1}^N \sigma_u \bar{Z}_u\Big\|_2^2
\Big]
} \nonumber=
\frac{W}{mN}
\sqrt{
\mathbb{E}_{\sigma}\Big[
\sum_{u=1}^N\sum_{v=1}^N \sigma_u\sigma_v\, \bar{Z}_u^\top \bar{Z}_v
\Big]
}.
\end{align}
Since $\sigma_u$ are independent across users with zero mean, the cross terms vanish for $u\neq v$, and $\mathbb{E}[\sigma_u^2]=1$. Hence:
\begin{equation}
\mathbb{E}_{\sigma}\Big[
\Big\|\sum_{u=1}^N \sigma_u \bar{Z}_u\Big\|_2^2
\Big]
=
\sum_{u=1}^N \|\bar{Z}_u\|_2^2
=
\sum_{u=1}^N \Big\|\sum_{i=1}^m z_{u,i}\Big\|_2^2.
\end{equation}

We now bound $\|\sum_{i=1}^m z_{u,i}\|_2^2$ using the latent correlation assumption. For any fixed user $u$:
\begin{align}
\Big\|\sum_{i=1}^m z_{u,i}\Big\|_2^2
=
\sum_{i=1}^m \|z_{u,i}\|_2^2 + \sum_{i\neq j} z_{u,i}^\top z_{u,j} \nonumber\le
mB^2 + m(m-1)\rho B^2 \nonumber=
mB^2\big(1+(m-1)\rho\big),
\end{align}
where we used the bound $\|z_{u,i}\|_2 \le B$ and the correlation condition \eqref{eq:rho_assump}. Substituting this back into the complexity bound:
\begin{align}
\hat{\mathfrak{R}}_S(\mathcal{F}_{\mathrm{set}})
\le
\frac{W}{mN}\sqrt{
\sum_{u=1}^N mB^2\big(1+(m-1)\rho\big)
} \nonumber=
\frac{W}{mN}\sqrt{NmB^2\big(1+(m-1)\rho\big)} \nonumber=
\frac{WB}{\sqrt{mN}}\sqrt{1+(m-1)\rho}.
\end{align}
Taking the expectation over $S$ yields \eqref{eq:rad_block}. Finally, substituting \eqref{eq:rad_block} and the number of independent sampling units $T=N$ (at the user level) into the standard generalization bound \eqref{eq:std_gen} yields the result in \eqref{eq:gen_block_thm}.
\end{proof}

\subsection{Proof of Lemma~\ref{lem:listwise_lipschitz}}
\label{pf:lipschitz}
\begin{proof}
Let $k = |\mathcal{P}_u|, \pi_i = \frac{\exp(s_i)}{\sum_{j=1}^m \exp(s_j)}$ be the softmax probability of item $i$. The gradient of $\ell_{\mathrm{list}}$ w.r.t.\ any score $s_t$ is:
\begin{equation}
\frac{\partial \ell_{\mathrm{list}}}{\partial s_t} = 
\begin{cases}
-\frac{1}{k} + \pi_t, & \text{if } t \in \mathcal{P}_u \text{ (positive item)}, \\
\pi_t, & \text{if } t \notin \mathcal{P}_u \text{ (negative item)}.
\end{cases}
\end{equation}
Compute the squared $\ell_2$-norm of the gradient vector $\nabla_s \ell$:
\begin{equation}
\begin{aligned}
\|\nabla_s \ell\|_2^2 
&= \sum_{t \in \mathcal{P}_u} \left( -\frac{1}{k} + \pi_t \right)^2 + \sum_{t \notin \mathcal{P}_u} (\pi_t)^2 = \sum_{t \in \mathcal{P}_u} \left( \frac{1}{k^2} - \frac{2}{k}\pi_t + \pi_t^2 \right) + \sum_{t \notin \mathcal{P}_u} \pi_t^2 \\
&= \sum_{t \in \mathcal{P}_u} \frac{1}{k^2} - \frac{2}{k} \sum_{t \in \mathcal{P}_u} \pi_t + \sum_{t=1}^m \pi_t^2 = k \cdot \frac{1}{k^2} - \frac{2}{k} \pi_{\mathrm{pos}} + \|\pi\|_2^2 = \frac{1}{k} - \frac{2}{k} \pi_{\mathrm{pos}} + \|\pi\|_2^2,
\end{aligned}
\end{equation}
where $\pi_{\mathrm{pos}} = \sum_{t \in \mathcal{P}_u} \pi_t \in [0,1]$ is the total probability mass on positive items.
Since $\|\pi\|_2^2 \le \sum \pi_i = 1$ and $\pi_{\mathrm{pos}} \ge 0$, the maximum value is bounded by:
\begin{equation}
\|\nabla_s \ell\|_2^2 \le \frac{1}{k} + 1 \le 2 \quad (\text{since } k \ge 1).
\end{equation}
Thus, $\|\nabla_s \ell\|_2 \le \sqrt{2}$, confirming the loss is $\sqrt{2}$-Lipschitz regardless of the number of positive items.
\end{proof}

\subsection{Proof of Theorem~\ref{thm:corr}}
\label{pf:corr}
\begin{proof}
\textbf{Pointwise:}
Let the pointwise model be defined as ${z}_k = \phi_\theta(u,{x}_k)$, where $\phi_\theta$ is a continuous mapping. For deep neural networks with bounded weights, $\phi_\theta$ is Lipschitz continuous. There exists a constant $K$ such that:
\begin{equation}
    \| {z}_i - {z}_j \| = \| \phi_\theta({x}_i) - \phi_\theta({x}_j) \| \le K \| {x}_i - {x}_j \|
\end{equation}
Substituting the input decomposition, we have $\| {x}_i - {x}_j \| = \| {d}_i - {d}_j \|$. Since the discriminative variations are minute, the Euclidean distance between outputs is constrained to be small. 
Assuming $\|{z}_i\|=\|z_j\| =1$, the squared distance is related to correlation by $\| {z}_i - {z}_j \|^2 = 2(1 - \rho_{\text{point}})$. The Lipschitz condition implies:
\begin{equation}
    2(1 - \rho_{\text{point}}) \le K^2 \| {d}_i - {d}_j \|^2 = \epsilon
\end{equation}
Thus, $\rho_{\text{point}} \ge 1 - \frac{\epsilon}{2} \to 1$. The pointwise network preserves the topological collinearity of the input space, leading to an update conflict where gradients cannot effectively separate ${z}_i$ and ${z}_j$.

\textbf{Setwise.}
Let the setwise model be defined as ${z}_k = \phi_\theta(u,{x}_i, X_u)$, where $X_u$ is the candidate set. The set-wise attention mechanism computes the representation as:
\begin{equation}
    {z}_k = W_V \left( {x}_k - \sum_{i \in X_u} \alpha_{k,i} {x}_i \right)
\end{equation}
To maximize the margin $s_i - s_j$, the gradient descent encourages the attention mechanism to ignore common features and extract differences. Specifically, the aggregated context vector approximates the common component $\mathbf{c}$. The effective transformation becomes a projection onto the subspace orthogonal to $\mathbf{c}$:
\begin{equation}
    {z}_i \to \mathcal{P}_{\mathbf{c}^\perp}({x}_i) = \mathbf{d}_i, \quad {z}_j \to \mathcal{P}_{\mathbf{c}^\perp}({x}_j) = \mathbf{d}_j
\end{equation}
By removing the dominant common component, the output representations are dominated by the discriminative terms $\mathbf{d}_i$ and $\mathbf{d}_j$. Assuming the distinct features are i.i.d., they are orthogonal in high-dimensional space ($\mathbf{d}_i \perp \mathbf{d}_j$). The resulting correlation is:
\begin{equation}
    \rho_{\text{set}} \propto \mathbf{d}_i^\top \mathbf{d}_j \to 0
\end{equation}
Hence we conclude that $\rho_{\text{set}} \ll \rho_{\text{point}}$. The setwise architecture breaks the correlation bottleneck by dynamically filtering out the shared context, thereby effectively minimizing the listwise ranking loss.
\end{proof}

\section{Construction of Datasets}
\label{datasets}
\paragraph{RecFlow.}
RecFlow \citep{liu2025recflow} logs user behavior throughout a multi-stage industrial recommendation pipeline, making the candidate sets at intermediate stages directly observable rather than inferred from final exposure alone. We follow the dataset protocol and use the 2nd period from Feb 5th to Feb 18th, which contains 3,308,233 requests. Our experiments intercept the pipeline at the re-ranking stage. For each request, we take the top 120 items passed from the upstream ranking module and retain their original order as a positional reference feature. The model consumes item-side attributes, including video ID, category, and author-related fields such as upload type, age, gender, and device type, together with a user-context sequence of fixed length $50$ that captures recent interactions.

\paragraph{MIND.}
MIND \cite{wu-etal-2020-mind} is a large-scale benchmark derived from Microsoft News impression logs. Each labeled sample follows the format $[uID, t, ClickHist, ImpLog]$, where $ImpLog$ enumerates the news articles displayed in an impression with click labels and $ClickHist$ records the user’s previously clicked news IDs in chronological order. Each news article provides a news ID, title, abstract, body, and an editor-assigned category; the release also includes extracted entities linked to WikiData, along with knowledge triples and embeddings. We treat each impression as a request and use the displayed articles in $ImpLog$ as the base candidate set. Beyond the real exposures, we augment the candidate set with additional negatives drawn from the full news pool: some are sampled uniformly at random, others are selected from globally popular articles, and the remainder are chosen from similar item set.

%% file: example_paper.bib
@article{ionescu2015matrix,
  author       = {Catalin Ionescu and
                  Orestis Vantzos and
                  Cristian Sminchisescu},
  title        = {Training Deep Networks with Structured Layers by Matrix Backpropagation},
  journal      = {CoRR},
  volume       = {abs/1509.07838},
  year         = {2015},
  url          = {http://arxiv.org/abs/1509.07838},
  eprinttype    = {arXiv},
  eprint       = {1509.07838},
  bibsource    = {dblp computer science bibliography, https://dblp.org}
}

@article{yu2024ifainteractionfidelityattention,
  author       = {Wenhui Yu and
                  Chao Feng and
                  Yanze Zhang and
                  Lantao Hu and
                  Peng Jiang and
                  Han Li},
  title        = {{IFA:} Interaction Fidelity Attention for Entire Lifelong Behaviour
                  Sequence Modeling},
  journal      = {CoRR},
  volume       = {abs/2406.09742},
  year         = {2024},
  url          = {https://doi.org/10.48550/arXiv.2406.09742},
  doi          = {10.48550/ARXIV.2406.09742},
  eprinttype    = {arXiv},
  eprint       = {2406.09742},
  bibsource    = {dblp computer science bibliography, https://dblp.org}
}

@article{menon2016bipartite,
  author       = {Aditya Krishna Menon and
                  Robert C. Williamson},
  title        = {Bipartite Ranking: a Risk-Theoretic Perspective},
  journal      = {J. Mach. Learn. Res.},
  volume       = {17},
  pages        = {195:1--195:102},
  year         = {2016},
  url          = {https://jmlr.org/papers/v17/14-265.html},
  bibsource    = {dblp computer science bibliography, https://dblp.org}
}

@inproceedings{zhou2018deepnetworkclickthroughrate,
  author       = {Guorui Zhou and
                  Xiaoqiang Zhu and
                  Chengru Song and
                  Ying Fan and
                  Han Zhu and
                  Xiao Ma and
                  Yanghui Yan and
                  Junqi Jin and
                  Han Li and
                  Kun Gai},
  editor       = {Yike Guo and
                  Faisal Farooq},
  title        = {Deep Interest Network for Click-Through Rate Prediction},
  booktitle    = {Proceedings of the 24th {ACM} {SIGKDD} International Conference on
                  Knowledge Discovery {\&} Data Mining, {KDD} 2018, London, UK,
                  August 19-23, 2018},
  pages        = {1059--1068},
  publisher    = {{ACM}},
  year         = {2018},
  url          = {https://doi.org/10.1145/3219819.3219823},
  doi          = {10.1145/3219819.3219823},
  bibsource    = {dblp computer science bibliography, https://dblp.org}
}

@misc{draye2025sparseattentionposttrainingmechanistic,
  author       = {Florent Draye and
                  Anson Lei and
                  Ingmar Posner and
                  Bernhard Sch{\"{o}}lkopf},
  title        = {Sparse Attention Post-Training for Mechanistic Interpretability},
  journal      = {CoRR},
  volume       = {abs/2512.05865},
  year         = {2025},
  url          = {https://doi.org/10.48550/arXiv.2512.05865},
  doi          = {10.48550/ARXIV.2512.05865},
  eprinttype    = {arXiv},
  eprint       = {2512.05865},
  bibsource    = {dblp computer science bibliography, https://dblp.org}
}

@article{ESMM,
  author       = {Xiao Ma and
                  Liqin Zhao and
                  Guan Huang and
                  Zhi Wang and
                  Zelin Hu and
                  Xiaoqiang Zhu and
                  Kun Gai},
  editor       = {Kevyn Collins{-}Thompson and
                  Qiaozhu Mei and
                  Brian D. Davison and
                  Yiqun Liu and
                  Emine Yilmaz},
  title        = {Entire Space Multi-Task Model: An Effective Approach for Estimating
                  Post-Click Conversion Rate},
  booktitle    = {The 41st International {ACM} {SIGIR} Conference on Research {\&}
                  Development in Information Retrieval, {SIGIR} 2018, Ann Arbor, MI,
                  USA, July 08-12, 2018},
  pages        = {1137--1140},
  publisher    = {{ACM}},
  year         = {2018},
  url          = {https://doi.org/10.1145/3209978.3210104},
  doi          = {10.1145/3209978.3210104},
  bibsource    = {dblp computer science bibliography, https://dblp.org}
}

@inproceedings{WangSCJLHC21,
  author       = {Ruoxi Wang and
                  Rakesh Shivanna and
                  Derek Zhiyuan Cheng and
                  Sagar Jain and
                  Dong Lin and
                  Lichan Hong and
                  Ed H. Chi},
  editor       = {Jure Leskovec and
                  Marko Grobelnik and
                  Marc Najork and
                  Jie Tang and
                  Leila Zia},
  title        = {{DCN} {V2:} Improved Deep {\&} Cross Network and Practical Lessons
                  for Web-scale Learning to Rank Systems},
  booktitle    = {{WWW} '21: The Web Conference 2021, Virtual Event / Ljubljana, Slovenia,
                  April 19-23, 2021},
  pages        = {1785--1797},
  publisher    = {{ACM} / {IW3C2}},
  year         = {2021},
  url          = {https://doi.org/10.1145/3442381.3450078},
  doi          = {10.1145/3442381.3450078},
  bibsource    = {dblp computer science bibliography, https://dblp.org}
}

@article{halko2011finding,
  author       = {Nathan Halko and
                  Per{-}Gunnar Martinsson and
                  Joel A. Tropp},
  title        = {Finding Structure with Randomness: Probabilistic Algorithms for Constructing
                  Approximate Matrix Decompositions},
  journal      = {{SIAM} Rev.},
  volume       = {53},
  number       = {2},
  pages        = {217--288},
  year         = {2011},
  url          = {https://doi.org/10.1137/090771806},
  doi          = {10.1137/090771806},
  bibsource    = {dblp computer science bibliography, https://dblp.org}
}

@article{bartlett2002rademacher,
  author       = {Peter L. Bartlett and
                  Shahar Mendelson},
  title        = {Rademacher and Gaussian Complexities: Risk Bounds and Structural Results},
  journal      = {J. Mach. Learn. Res.},
  volume       = {3},
  pages        = {463--482},
  year         = {2002},
  url          = {https://jmlr.org/papers/v3/bartlett02a.html},
  bibsource    = {dblp computer science bibliography, https://dblp.org}
}

@misc{vaswani2023attentionneed,
  author       = {Ashish Vaswani and
                  Noam Shazeer and
                  Niki Parmar and
                  Jakob Uszkoreit and
                  Llion Jones and
                  Aidan N. Gomez and
                  Lukasz Kaiser and
                  Illia Polosukhin},
  editor       = {Isabelle Guyon and
                  Ulrike von Luxburg and
                  Samy Bengio and
                  Hanna M. Wallach and
                  Rob Fergus and
                  S. V. N. Vishwanathan and
                  Roman Garnett},
  title        = {Attention is All you Need},
  booktitle    = {Advances in Neural Information Processing Systems 30: Annual Conference
                  on Neural Information Processing Systems 2017, December 4-9, 2017,
                  Long Beach, CA, {USA}},
  pages        = {5998--6008},
  year         = {2017},
  url          = {https://proceedings.neurips.cc/paper/2017/hash/3f5ee243547dee91fbd053c1c4a845aa-Abstract.html},
  bibsource    = {dblp computer science bibliography, https://dblp.org}
}

@misc{dosovitskiy2021imageworth16x16words,
  author       = {Alexey Dosovitskiy and
                  Lucas Beyer and
                  Alexander Kolesnikov and
                  Dirk Weissenborn and
                  Xiaohua Zhai and
                  Thomas Unterthiner and
                  Mostafa Dehghani and
                  Matthias Minderer and
                  Georg Heigold and
                  Sylvain Gelly and
                  Jakob Uszkoreit and
                  Neil Houlsby},
  title        = {An Image is Worth 16x16 Words: Transformers for Image Recognition
                  at Scale},
  booktitle    = {9th International Conference on Learning Representations, {ICLR} 2021,
                  Virtual Event, Austria, May 3-7, 2021},
  publisher    = {OpenReview.net},
  year         = {2021},
  url          = {https://openreview.net/forum?id=YicbFdNTTy},
  bibsource    = {dblp computer science bibliography, https://dblp.org}
}

@misc{wang2020linformerselfattentionlinearcomplexity,
  author       = {Sinong Wang and
                  Belinda Z. Li and
                  Madian Khabsa and
                  Han Fang and
                  Hao Ma},
  title        = {Linformer: Self-Attention with Linear Complexity},
  journal      = {CoRR},
  volume       = {abs/2006.04768},
  year         = {2020},
  url          = {https://arxiv.org/abs/2006.04768},
  eprinttype    = {arXiv},
  eprint       = {2006.04768},
  bibsource    = {dblp computer science bibliography, https://dblp.org}
}

@misc{lee2019settransformerframeworkattentionbased,
  author       = {Juho Lee and
                  Yoonho Lee and
                  Jungtaek Kim and
                  Adam R. Kosiorek and
                  Seungjin Choi and
                  Yee Whye Teh},
  editor       = {Kamalika Chaudhuri and
                  Ruslan Salakhutdinov},
  title        = {Set Transformer: {A} Framework for Attention-based Permutation-Invariant
                  Neural Networks},
  booktitle    = {Proceedings of the 36th International Conference on Machine Learning,
                  {ICML} 2019, 9-15 June 2019, Long Beach, California, {USA}},
  series       = {Proceedings of Machine Learning Research},
  volume       = {97},
  pages        = {3744--3753},
  publisher    = {{PMLR}},
  year         = {2019},
  url          = {http://proceedings.mlr.press/v97/lee19d.html},
  bibsource    = {dblp computer science bibliography, https://dblp.org}
}

@misc{kitaev2020reformerefficienttransformer,
  author       = {Nikita Kitaev and
                  Lukasz Kaiser and
                  Anselm Levskaya},
  title        = {Reformer: The Efficient Transformer},
  booktitle    = {8th International Conference on Learning Representations, {ICLR} 2020,
                  Addis Ababa, Ethiopia, April 26-30, 2020},
  publisher    = {OpenReview.net},
  year         = {2020},
  url          = {https://openreview.net/forum?id=rkgNKkHtvB},
  bibsource    = {dblp computer science bibliography, https://dblp.org}
}

@misc{beltagy2020longformerlongdocumenttransformer,
  author       = {Iz Beltagy and
                  Matthew E. Peters and
                  Arman Cohan},
  title        = {Longformer: The Long-Document Transformer},
  journal      = {CoRR},
  volume       = {abs/2004.05150},
  year         = {2020},
  url          = {https://arxiv.org/abs/2004.05150},
  eprinttype    = {arXiv},
  eprint       = {2004.05150},
  bibsource    = {dblp computer science bibliography, https://dblp.org}
}

@misc{liu2021swintransformerhierarchicalvision,
  author       = {Ze Liu and
                  Yutong Lin and
                  Yue Cao and
                  Han Hu and
                  Yixuan Wei and
                  Zheng Zhang and
                  Stephen Lin and
                  Baining Guo},
  title        = {Swin Transformer: Hierarchical Vision Transformer using Shifted Windows},
  booktitle    = {2021 {IEEE/CVF} International Conference on Computer Vision, {ICCV}
                  2021, Montreal, QC, Canada, October 10-17, 2021},
  pages        = {9992--10002},
  publisher    = {{IEEE}},
  year         = {2021},
  url          = {https://doi.org/10.1109/ICCV48922.2021.00986},
  doi          = {10.1109/ICCV48922.2021.00986},
  bibsource    = {dblp computer science bibliography, https://dblp.org}
}

@misc{fan2025rectifyingmagnitudeneglectlinear,
  author       = {Qihang Fan and
                  Huaibo Huang and
                  Yuang Ai and
                  Ran He},
  title        = {Rectifying Magnitude Neglect in Linear Attention},
  journal      = {CoRR},
  volume       = {abs/2507.00698},
  year         = {2025},
  url          = {https://doi.org/10.48550/arXiv.2507.00698},
  doi          = {10.48550/ARXIV.2507.00698},
  eprinttype    = {arXiv},
  eprint       = {2507.00698},
  bibsource    = {dblp computer science bibliography, https://dblp.org}
}

@misc{katharopoulos2020transformersrnnsfastautoregressive,
  author       = {Angelos Katharopoulos and
                  Apoorv Vyas and
                  Nikolaos Pappas and
                  Fran{\c{c}}ois Fleuret},
  title        = {Transformers are RNNs: Fast Autoregressive Transformers with Linear
                  Attention},
  booktitle    = {Proceedings of the 37th International Conference on Machine Learning,
                  {ICML} 2020, 13-18 July 2020, Virtual Event},
  series       = {Proceedings of Machine Learning Research},
  volume       = {119},
  pages        = {5156--5165},
  publisher    = {{PMLR}},
  year         = {2020},
  url          = {http://proceedings.mlr.press/v119/katharopoulos20a.html},
  bibsource    = {dblp computer science bibliography, https://dblp.org}
}

@inproceedings{Song_2025, series={SIGIR ’25},
  author       = {Xin Song and
                  Xiaochen Li and
                  Jinxin Hu and
                  Hong Wen and
                  Zulong Chen and
                  Yu Zhang and
                  Xiaoyi Zeng and
                  Jing Zhang},
  editor       = {Nicola Ferro and
                  Maria Maistro and
                  Gabriella Pasi and
                  Omar Alonso and
                  Andrew Trotman and
                  Suzan Verberne},
  title        = {{LREA:} Low-Rank Efficient Attention on Modeling Long-Term User Behaviors
                  for {CTR} Prediction},
  booktitle    = {Proceedings of the 48th International {ACM} {SIGIR} Conference on
                  Research and Development in Information Retrieval, {SIGIR} 2025, Padua,
                  Italy, July 13-18, 2025},
  pages        = {2843--2847},
  publisher    = {{ACM}},
  year         = {2025},
  url          = {https://doi.org/10.1145/3726302.3730228},
  doi          = {10.1145/3726302.3730228},
  bibsource    = {dblp computer science bibliography, https://dblp.org}
}

@ARTICLE{5197422,
  author       = {Yehuda Koren and
                  Robert M. Bell and
                  Chris Volinsky},
  title        = {Matrix Factorization Techniques for Recommender Systems},
  journal      = {Computer},
  volume       = {42},
  number       = {8},
  pages        = {30--37},
  year         = {2009},
  url          = {https://doi.org/10.1109/MC.2009.263},
  doi          = {10.1109/MC.2009.263},
  bibsource    = {dblp computer science bibliography, https://dblp.org}
}

@misc{hu2021loralowrankadaptationlarge,
  author       = {Edward J. Hu and
                  Yelong Shen and
                  Phillip Wallis and
                  Zeyuan Allen{-}Zhu and
                  Yuanzhi Li and
                  Shean Wang and
                  Lu Wang and
                  Weizhu Chen},
  title        = {LoRA: Low-Rank Adaptation of Large Language Models},
  booktitle    = {The Tenth International Conference on Learning Representations, {ICLR}
                  2022, Virtual Event, April 25-29, 2022},
  publisher    = {OpenReview.net},
  year         = {2022},
  url          = {https://openreview.net/forum?id=nZeVKeeFYf9},
  bibsource    = {dblp computer science bibliography, https://dblp.org}
}

@misc{deepseekai2024deepseekv2strongeconomicalefficient,
  author       = {DeepSeek{-}AI},
  title        = {DeepSeek-V2: {A} Strong, Economical, and Efficient Mixture-of-Experts
                  Language Model},
  journal      = {CoRR},
  volume       = {abs/2405.04434},
  year         = {2024},
  url          = {https://doi.org/10.48550/arXiv.2405.04434},
  doi          = {10.48550/ARXIV.2405.04434},
  eprinttype    = {arXiv},
  eprint       = {2405.04434},
  bibsource    = {dblp computer science bibliography, https://dblp.org}
}

@misc{chang2023twintwostagenetworklifelong,
  author       = {Jianxin Chang and
                  Chenbin Zhang and
                  Zhiyi Fu and
                  Xiaoxue Zang and
                  Lin Guan and
                  Jing Lu and
                  Yiqun Hui and
                  Dewei Leng and
                  Yanan Niu and
                  Yang Song and
                  Kun Gai},
  editor       = {Ambuj K. Singh and
                  Yizhou Sun and
                  Leman Akoglu and
                  Dimitrios Gunopulos and
                  Xifeng Yan and
                  Ravi Kumar and
                  Fatma Ozcan and
                  Jieping Ye},
  title        = {{TWIN:} TWo-stage Interest Network for Lifelong User Behavior Modeling
                  in {CTR} Prediction at Kuaishou},
  booktitle    = {Proceedings of the 29th {ACM} {SIGKDD} Conference on Knowledge Discovery
                  and Data Mining, {KDD} 2023, Long Beach, CA, USA, August 6-10, 2023},
  pages        = {3785--3794},
  publisher    = {{ACM}},
  year         = {2023},
  url          = {https://doi.org/10.1145/3580305.3599922},
  doi          = {10.1145/3580305.3599922},
  bibsource    = {dblp computer science bibliography, https://dblp.org}
}

@misc{zhai2024actionsspeaklouderwords,
  author       = {Jiaqi Zhai and
                  Lucy Liao and
                  Xing Liu and
                  Yueming Wang and
                  Rui Li and
                  Xuan Cao and
                  Leon Gao and
                  Zhaojie Gong and
                  Fangda Gu and
                  Jiayuan He and
                  Yinghai Lu and
                  Yu Shi},
  title        = {Actions Speak Louder than Words: Trillion-Parameter Sequential Transducers
                  for Generative Recommendations},
  booktitle    = {Forty-first International Conference on Machine Learning, {ICML} 2024,
                  Vienna, Austria, July 21-27, 2024},
  publisher    = {OpenReview.net},
  year         = {2024},
  url          = {https://openreview.net/forum?id=xye7iNsgXn},
  bibsource    = {dblp computer science bibliography, https://dblp.org}
}

@misc{zhang2024wukongscalinglawlargescale,
  author       = {Buyun Zhang and
                  Liang Luo and
                  Yuxin Chen and
                  Jade Nie and
                  Xi Liu and
                  Shen Li and
                  Yanli Zhao and
                  Yuchen Hao and
                  Yantao Yao and
                  Ellie Dingqiao Wen and
                  Jongsoo Park and
                  Maxim Naumov and
                  Wenlin Chen},
  title        = {Wukong: Towards a Scaling Law for Large-Scale Recommendation},
  booktitle    = {Forty-first International Conference on Machine Learning, {ICML} 2024,
                  Vienna, Austria, July 21-27, 2024},
  publisher    = {OpenReview.net},
  year         = {2024},
  url          = {https://openreview.net/forum?id=8iUgr2nuwo},
  bibsource    = {dblp computer science bibliography, https://dblp.org}
}

@misc{chai2025longerscalinglongsequence,
  author       = {Zheng Chai and
                  Qin Ren and
                  Xijun Xiao and
                  Huizhi Yang and
                  Bo Han and
                  Sijun Zhang and
                  Di Chen and
                  Hui Lu and
                  Wenlin Zhao and
                  Lele Yu and
                  Xionghang Xie and
                  Shiru Ren and
                  Xiang Sun and
                  Yaocheng Tan and
                  Peng Xu and
                  Yuchao Zheng and
                  Di Wu},
  editor       = {M{\'{a}}ria Bielikov{\'{a}} and
                  Pavel Kord{\'{\i}}k and
                  Markus Schedl and
                  Marco de Gemmis and
                  Sole Pera and
                  Rodrigo Alves and
                  Olivier Jeunen and
                  Vito Ostuni},
  title        = {{LONGER:} Scaling Up Long Sequence Modeling in Industrial Recommenders},
  booktitle    = {Proceedings of the Nineteenth {ACM} Conference on Recommender Systems,
                  RecSys 2025, Prague, Czech Republic, September 22-26, 2025},
  pages        = {247--256},
  publisher    = {{ACM}},
  year         = {2025},
  url          = {https://doi.org/10.1145/3705328.3748065},
  doi          = {10.1145/3705328.3748065},
  bibsource    = {dblp computer science bibliography, https://dblp.org}
}

@misc{chen2022efficientlongsequentialuser,
  author       = {Qiwei Chen and
                  Yue Xu and
                  Changhua Pei and
                  Shanshan Lv and
                  Tao Zhuang and
                  Junfeng Ge},
  title        = {Efficient Long Sequential User Data Modeling for Click-Through Rate
                  Prediction},
  journal      = {CoRR},
  volume       = {abs/2209.12212},
  year         = {2022},
  url          = {https://doi.org/10.48550/arXiv.2209.12212},
  doi          = {10.48550/ARXIV.2209.12212},
  eprinttype    = {arXiv},
  eprint       = {2209.12212},
  bibsource    = {dblp computer science bibliography, https://dblp.org}
}

@inproceedings{Si_2024, 
  author       = {Zihua Si and
                  Lin Guan and
                  Zhongxiang Sun and
                  Xiaoxue Zang and
                  Jing Lu and
                  Yiqun Hui and
                  Xingchao Cao and
                  Zeyu Yang and
                  Yichen Zheng and
                  Dewei Leng and
                  Kai Zheng and
                  Chenbin Zhang and
                  Yanan Niu and
                  Yang Song and
                  Kun Gai},
  editor       = {Edoardo Serra and
                  Francesca Spezzano},
  title        = {{TWIN} {V2:} Scaling Ultra-Long User Behavior Sequence Modeling for
                  Enhanced {CTR} Prediction at Kuaishou},
  booktitle    = {Proceedings of the 33rd {ACM} International Conference on Information
                  and Knowledge Management, {CIKM} 2024, Boise, ID, USA, October 21-25,
                  2024},
  pages        = {4890--4897},
  publisher    = {{ACM}},
  year         = {2024},
  url          = {https://doi.org/10.1145/3627673.3680030},
  doi          = {10.1145/3627673.3680030},
  bibsource    = {dblp computer science bibliography, https://dblp.org}
}

@misc{qi2020searchbasedusermodelinglifelong,
  author       = {Qi Pi and
                  Guorui Zhou and
                  Yujing Zhang and
                  Zhe Wang and
                  Lejian Ren and
                  Ying Fan and
                  Xiaoqiang Zhu and
                  Kun Gai},
  editor       = {Mathieu d'Aquin and
                  Stefan Dietze and
                  Claudia Hauff and
                  Edward Curry and
                  Philippe Cudr{\'{e}}{-}Mauroux},
  title        = {Search-based User Interest Modeling with Lifelong Sequential Behavior
                  Data for Click-Through Rate Prediction},
  booktitle    = {{CIKM} '20: The 29th {ACM} International Conference on Information
                  and Knowledge Management, Virtual Event, Ireland, October 19-23, 2020},
  pages        = {2685--2692},
  publisher    = {{ACM}},
  year         = {2020},
  url          = {https://doi.org/10.1145/3340531.3412744},
  doi          = {10.1145/3340531.3412744},
  bibsource    = {dblp computer science bibliography, https://dblp.org}
}

@misc{zhou2018deepevolutionnetworkclickthrough,
  author       = {Guorui Zhou and
                  Na Mou and
                  Ying Fan and
                  Qi Pi and
                  Weijie Bian and
                  Chang Zhou and
                  Xiaoqiang Zhu and
                  Kun Gai},
  title        = {Deep Interest Evolution Network for Click-Through Rate Prediction},
  booktitle    = {The Thirty-Third {AAAI} Conference on Artificial Intelligence, {AAAI}
                  2019, The Thirty-First Innovative Applications of Artificial Intelligence
                  Conference, {IAAI} 2019, The Ninth {AAAI} Symposium on Educational
                  Advances in Artificial Intelligence, {EAAI} 2019, Honolulu, Hawaii,
                  USA, January 27 - February 1, 2019},
  pages        = {5941--5948},
  publisher    = {{AAAI} Press},
  year         = {2019},
  url          = {https://doi.org/10.1609/aaai.v33i01.33015941},
  doi          = {10.1609/AAAI.V33I01.33015941},
  bibsource    = {dblp computer science bibliography, https://dblp.org}
}

@misc{kang2018selfattentivesequentialrecommendation,
  author       = {Wang{-}Cheng Kang and
                  Julian J. McAuley},
  title        = {Self-Attentive Sequential Recommendation},
  booktitle    = {{IEEE} International Conference on Data Mining, {ICDM} 2018, Singapore,
                  November 17-20, 2018},
  pages        = {197--206},
  publisher    = {{IEEE} Computer Society},
  year         = {2018},
  url          = {https://doi.org/10.1109/ICDM.2018.00035},
  doi          = {10.1109/ICDM.2018.00035},
  bibsource    = {dblp computer science bibliography, https://dblp.org}
}

@misc{gui2023hiformerheterogeneousfeatureinteractions,
  author       = {Huan Gui and
                  Ruoxi Wang and
                  Ke Yin and
                  Long Jin and
                  Maciej Kula and
                  Taibai Xu and
                  Lichan Hong and
                  Ed H. Chi},
  title        = {Hiformer: Heterogeneous Feature Interactions Learning with Transformers
                  for Recommender Systems},
  journal      = {CoRR},
  volume       = {abs/2311.05884},
  year         = {2023},
  url          = {https://doi.org/10.48550/arXiv.2311.05884},
  doi          = {10.48550/ARXIV.2311.05884},
  eprinttype    = {arXiv},
  eprint       = {2311.05884},
  bibsource    = {dblp computer science bibliography, https://dblp.org}
}

@inproceedings{wu-etal-2020-mind,
  author       = {Fangzhao Wu and
                  Ying Qiao and
                  Jiun{-}Hung Chen and
                  Chuhan Wu and
                  Tao Qi and
                  Jianxun Lian and
                  Danyang Liu and
                  Xing Xie and
                  Jianfeng Gao and
                  Winnie Wu and
                  Ming Zhou},
  editor       = {Dan Jurafsky and
                  Joyce Chai and
                  Natalie Schluter and
                  Joel R. Tetreault},
  title        = {{MIND:} {A} Large-scale Dataset for News Recommendation},
  booktitle    = {Proceedings of the 58th Annual Meeting of the Association for Computational
                  Linguistics, {ACL} 2020, Online, July 5-10, 2020},
  pages        = {3597--3606},
  publisher    = {Association for Computational Linguistics},
  year         = {2020},
  url          = {https://doi.org/10.18653/v1/2020.acl-main.331},
  doi          = {10.18653/V1/2020.ACL-MAIN.331},
  bibsource    = {dblp computer science bibliography, https://dblp.org}
}

@inproceedings{liu2025recflow,
  author       = {Qi Liu and
                  Kai Zheng and
                  Rui Huang and
                  Wuchao Li and
                  Kuo Cai and
                  Yuan Chai and
                  Yanan Niu and
                  Yiqun Hui and
                  Bing Han and
                  Na Mou and
                  Hongning Wang and
                  Wentian Bao and
                  Yunen Yu and
                  Guorui Zhou and
                  Han Li and
                  Yang Song and
                  Defu Lian and
                  Kun Gai},
  title        = {RecFlow: An Industrial Full Flow Recommendation Dataset},
  booktitle    = {The Thirteenth International Conference on Learning Representations,
                  {ICLR} 2025, Singapore, April 24-28, 2025},
  publisher    = {OpenReview.net},
  year         = {2025},
  url          = {https://openreview.net/forum?id=vVHc8bGRns},
  bibsource    = {dblp computer science bibliography, https://dblp.org}
}

@misc{guo2024embeddingcollapsescalingrecommendation,
  author       = {Xingzhuo Guo and
                  Junwei Pan and
                  Ximei Wang and
                  Baixu Chen and
                  Jie Jiang and
                  Mingsheng Long},
  title        = {On the Embedding Collapse when Scaling up Recommendation Models},
  booktitle    = {Forty-first International Conference on Machine Learning, {ICML} 2024,
                  Vienna, Austria, July 21-27, 2024},
  publisher    = {OpenReview.net},
  year         = {2024},
  url          = {https://openreview.net/forum?id=aPVwOAr1aW},
  bibsource    = {dblp computer science bibliography, https://dblp.org}
}
